\documentclass[letterpaper, 10 pt, conference]{ieeeconf}  

\IEEEoverridecommandlockouts                              

\usepackage{amsfonts,latexsym,amsmath,amssymb}
\usepackage[mathscr]{eucal}
\usepackage{graphicx,color}
\usepackage{mathtools}
\usepackage{stmaryrd}
\usepackage{url}
\usepackage{dsfont}
 
\usepackage{amsthm}
\usepackage{bbm}
\usepackage{psfrag}

\usepackage{amsgen}
\usepackage{amstext}
\usepackage{amsbsy}
\usepackage{amsopn}

\usepackage{bm}

\usepackage{subcaption} 

\providecommand{\N}{\mathbb{N}}
\providecommand{\R}{\mathbb{R}}

\providecommand{\S}{\mathrm{S}}

\providecommand{\SO}{\mathbf{SO}}

\usepackage{accents}
\makeatletter
\providecommand{\scirc}{%
    \hbox{\fontfamily{\rmdefault}\fontsize{0.4\dimexpr(\f@size pt)}{0}\selectfont{\raisebox{-0.52ex}[0ex][-0.52ex]{$\circ$}}}}

\makeatother

\mathchardef\mhyphen="2D

\def \S {{\mathrm S}}

\usepackage[colorlinks,bookmarksopen,bookmarksnumbered,citecolor=red,urlcolor=red]{hyperref}

\makeatletter
\newcommand{\pushright}[1]{\ifmeasuring@#1\else\omit\hfill$\displaystyle#1$\fi\ignorespaces}
\newcommand{\pushleft}[1]{\ifmeasuring@#1\else\omit$\displaystyle#1$\hfill\fi\ignorespaces}
\makeatother

\newtheorem{theorem}{Theorem}
\newtheorem{proposition}{Proposition}
\newtheorem{definition}{Definition}
\newtheorem{assumption}{Assumption}
\newtheorem{lemma}{Lemma}
\newtheorem{remark}{Remark}

\usepackage{siunitx}

\title{\LARGE \bf
Leader-Follower Formation Control with Prescribed Convergence Rates under Bearing Persistence of Excitation (Extended Version)
}

\author{
Tarek Bouazza, Zhiqi Tang, \IEEEmembership{Member, IEEE}, Soulaimane Berkane, \IEEEmembership{Senior Member, IEEE}, \\ and Tarek Hamel, \IEEEmembership{Fellow, IEEE}
\thanks{This research is supported by the ``Grands Fonds Marins'' Project Deep-C, the Natural Sciences and Engineering Research Council of Canada (NSERC) under the Discovery Grant RGPIN-2020-04759, and the Fonds de recherche du Québec (FRQ).}
\thanks{
T. Bouazza is with I3S, CNRS, Université Côte d'Azur, 06903 Sophia Antipolis, France, {\tt\small bouazza@i3s.unice.fr}.}
\thanks{
Z. Tang is with the Department of Electrical and Electronic Engineering, the University of Manchester, Manchester M13 9PL, United Kingdom, {\tt\small zhiqi.tang@manchester.ac.uk}.}
\thanks{S. Berkane is with the Département d’informatique et d’ingénierie, Université du Québec en Outaouis, Gatineau, QC J8X 3X7, Canada, and also with the Department of Electrical
Engineering, Lakehead University, Thunder Bay, ON P7B 5E1, Canada,
{\tt\small soulaimane.berkane@uqo.ca}.}
\thanks{
T. Hamel is with I3S, CNRS, Université Côte d'Azur, 06903 Sophia Antipolis, France, and also with Institut Universitaire de France (IUF), 75005, Paris, France,
         {\tt\small thamel@i3s.unice.fr}.}
}

\begin{document}

\maketitle
\thispagestyle{empty}
\pagestyle{empty}


\begin{abstract}
This paper addresses leader-follower formation control using only relative bearing and velocity measurements. Bearing-based leader-follower control strategies commonly use fixed control gains, for which the guaranteed convergence rates explicitly depend on the persistence of excitation (PE) properties of the desired formations. We propose a time-varying matrix gain that evolves according to the bearing information available to each follower and decouples the convergence rate from the PE bound. We establish the well-posedness and uniform boundedness of the proposed gain for bearing-persistently exciting formations, and characterize the exponential convergence rate through a tunable design parameter. The proposed control design is further extended to leader-follower formations with double-integrator dynamics. Simulations illustrate the resulting convergence properties compared with fixed gain designs.
\end{abstract}

    \section{Introduction}

The problem of formation control has been extensively studied over the past two decades by both the robotics and control communities. Existing approaches are commonly classified into four main categories \cite{oh2015survey}: position-based formation control \cite{ren2007distributed,brinon2014cooperative}, displacement-based formation control \cite{ren2004coordination}, distance-based formation control \cite{anderson2007control}, and, more recently, bearing-based formation control \cite{basiri2010distributed}. Bearing-based methods have attracted growing interest because they require only relative bearing measurements, which can be obtained with lightweight, low-cost sensors such as monocular cameras, rather than inter-agent distances or global positions. 

Early work on bearing-based formation control focused mainly on planar formations and on controlling subtended bearing angles in each agent’s local frame (\textit{e.g.,} \cite{basiri2010distributed, bishop2011very}). The subsequent development of bearing rigidity theory provided conditions under which a formation is uniquely determined, up to translation and scale, by inter-agent bearings. Bearing rigidity in the plane (also termed parallel rigidity) was studied in \cite{eren2003sensor, servatius1999constraining} and later generalized to arbitrary dimensions together with formation controllers for undirected graphs in \cite{zhao2016bearing}, where infinitesimal bearing rigidity guarantees convergence to the target formation up to a global translation and scaling factor. In directed graphs, rigidity alone is insufficient, and additional constraint-consistency conditions give rise to the notion of bearing persistence \cite{zhao2015bearing}; related results on directed bearing rigidity are reported in \cite{eren2012formation}. For leader–first follower (LFF) formations, Trinh \emph{et al.}~\cite{trinh2018bearing} proposed bearing control laws that asymptotically stabilize the formation up to the leader’s translation and a common scale factor.

The scale invariance of bearing rigidity typically requires at least one inter-agent distance measurement to fix the scale and ensure convergence in both shape and size. This idea underpins, for example, Schiano \emph{et al.}~\cite{schiano2016rigidity}, who combined directed bearing rigidity in $\R^3 \times \S^1$ with a single distance measurement to fully determine the formation. More recently, Tang \emph{et al.}~\cite{tang2020bearing,tang2021formation} introduced the concepts of relaxed bearing rigidity and bearing persistently exciting (bearing-PE) formations, and showed that persistence of excitation (PE) of the desired bearings can replace classical bearing rigidity and constraint-consistency requirements while still guaranteeing uniform exponential stabilization of the formation shape and scale in leader–follower settings.

A key motivation for this work is the control of minimal-topology leader-follower bearing-PE formations, such as star graphs, in which each follower is connected to a single leader. Existing controllers provide exponential convergence rates that are explicitly related to the PE properties of the desired formation~\cite{tang2020bearing,tang2021formation}. Consequently, these rates can become arbitrarily small even for bearing-PE formations, limiting the performance achievable with fixed gains.

In this paper, we consider the problem of controlling a leader-follower formation (\textit{i.e.}, a formation under a directed acyclic graph that has a spanning tree, see Fig. \ref{fig:graph_topologies}) using only bearing and relative velocity measurements. The objective is to stabilize the formation to a desired geometric pattern whose time-varying bearings satisfy a PE condition.
We design a bearing-based formation controller that preserves the minimal sensing and relaxed-topology advantages of existing strategies, while decoupling the exponential convergence rate from the PE properties of the desired formation.
The proposed controller employs a time-varying matrix gain that adapts to the excitation accumulated along the bearing trajectories, which allows the convergence rate to be prescribed by the controller design parameters.
The control is further extended to double-integrator dynamics, and the theoretical results are validated through extensive simulations.

The paper is organized as follows. Section~\ref{sec:prelims} introduces the notation and the preliminaries on graph theory, persistence of excitation, and bearing-persistently exciting formations. Section~\ref{sec:1st_order_control} presents the proposed bearing-based leader-follower formation controller, establishes its stability and convergence properties for general sensing topologies. Section~\ref{sec:2nd_order_control} extends the design to double-integrator agent dynamics. Section~\ref{sec:simresults} illustrates the effectiveness of the proposed approach through simulation. 
Section~\ref{sec:conclusion} concludes the paper.

 \section{Preliminaries} \label{sec:prelims}

\subsection{Notation}
We denote by $\R$ and $\N$ the sets of reals and natural numbers, respectively.
The $3$-dimensional Euclidean space is denoted by $\R^3$. 
The Euclidean norm of a vector $\bm{x} \in \R^3$ is $\|\bm{x}\| = \sqrt{\bm{x}^\top \bm{x}}$.
For any $\bm{x} \in \mathbb{R}^3$, $\bm{x}^\times$ denotes the skew-symmetric matrix associated with the cross product, satisfying $\bm{x}^\times \bm{y} = \bm{x} \times \bm{y}$ for all $\bm{y}\in \R^3$.

We denote by $\mathbb{R}^{m \times n}$ the set of real $m \times n$ matrices, and by 
$I_n \in \mathbb{R}^{n \times n}$ the identity matrix. For a symmetric matrix 
$A \in \mathbb{R}^{n \times n}$, we write $A \succ 0$ (resp. $A \succeq 0$) to indicate 
that $A$ is positive definite (resp. positive semi-definite), and $A \preceq B$ to mean 
$B - A \succeq 0$.  
$\|A\|$ denotes the spectral norm of a matrix $A \in \R^{n \times n}$.
$\mathrm{blkdiag}(A_1, \dots, A_k)$ denotes the block-diagonal matrix with diagonal  blocks $A_1, \dots, A_k$. 

The unit sphere $\mathrm{S}^{2} := \{ \bm{y} \in \R^{3} \mid |\bm{y}| = 1 \} \subset \R^{3}$ denotes the set of unit vectors in $\R^3$. 
The projection onto the plane orthogonal to $\bm{y} \in \mathrm{S}^2$ is
$\pi_{\bm{y}} := I_3 - \bm{y}\bm{y}^\top = - \bm{y}^\times \bm{y}^\times \in \mathbb{R}^{3 \times 3}$, which satisfies $\pi_{\bm{y}} \succeq 0$, $\pi_{\bm{y}}^\top = \pi_{\bm{y}}$, and $\pi_{\bm{y}} \bm{y} = 0$.

\subsection{Bearing persistence of excitation}

The following definition and persistence of excitation properties are borrowed from \cite{le2017observers}.

\begin{definition} \label{def:PE}
Let $Q(t) \in \mathbb{R}^{n \times n}$ be a symmetric positive semi-definite matrix. 
It is said to be \emph{persistently exciting} (PE) if there exist constants 
$\delta, \mu > 0$ such that, for all $t \geq 0$,
\begin{equation} 
\frac{1}{\delta} \int_{t}^{t+\delta} Q(\tau)\, d\tau 
\;\succeq\; \mu I_n.
\label{eq:PE_matrix}
\end{equation}
\end{definition}

\begin{definition}\label{def:PE_y}
A time-varying direction $\bm{y}(t) \in \mathbb{S}^2$ is said to be \textbf{persistently exciting (PE)} if the associated projection $\pi_{\bm{y}(t)}$ satisfies the PE condition in Definition~1.
\end{definition}

\subsection{Preliminaries on graph theory}

Consider a system of $n$ $(n \geq 2)$ connected agents. The underlying interaction topology is modelled as a directed graph (digraph) $\mathcal{G} := (\mathcal{V}, \mathcal{E})$,
where $\mathcal{V} = \{1,2,\dots,n\}$ is the set of vertices and 
$\mathcal{E} \subseteq \mathcal{V} \times \mathcal{V}$ is the set of directed edges.
In this work, the graph is interpreted as a \emph{sensing graph}, meaning that if the ordered pair $(i,j) \in \mathcal{E}$, then agent $i$ can access or sense information about agent $j$, which is called a neighbor of agent $i$. 
The set of neighbors of agent $i$ is defined as $\mathcal{N}_i := \{\, j \in \mathcal{V} \mid (i,j) \in \mathcal{E} \,\}$.
Define $m_i = |\mathcal{N}_i|$, where $|\cdot|$ denotes the cardinality of a set.
A \emph{directed path} is a finite sequence of distinct vertices 
$\nu_1, \nu_2, \dots, \nu_k$ such that $(\nu_{i-1}, \nu_i) \in \mathcal{E}$, $2 \leq i \leq k$.


\begin{definition} \label{def:LF_formation}
A digraph $\mathcal{G} = (\mathcal{V}, \mathcal{E})$ has a \emph{leader-follower structure} 
if it is acyclic and has a directed spanning tree rooted at agent~$1$, \textit{i.e.}, 
$\mathcal{N}_1 = \emptyset$, $\mathcal{N}_2 = \{1\}$, and $\mathcal{N}_i \subseteq \{1, \dots, i-1\}$ for each $i \geq 3$. It has a \emph{minimal leader-follower structure} if each follower 
has exactly one neighbor. 
\end{definition}

This generalizes the leader-first follower (LFF) structure considered in \cite{trinh2018bearing}, where each follower has two neighbors, except for agent~$2$, which is connected only to the leader.

\begin{figure}[t]
    \centering
    \begin{subfigure}[t]{0.115\textwidth}
        \centering
        \includegraphics[width=.75\linewidth]{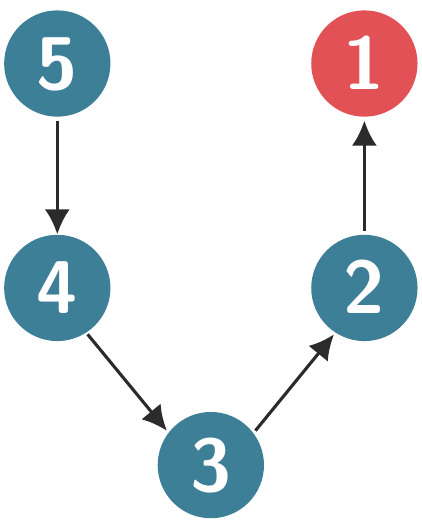}
        \caption{Chain}
    \end{subfigure}
    \begin{subfigure}[t]{0.115\textwidth}
        \centering
        \includegraphics[width=.98\linewidth]{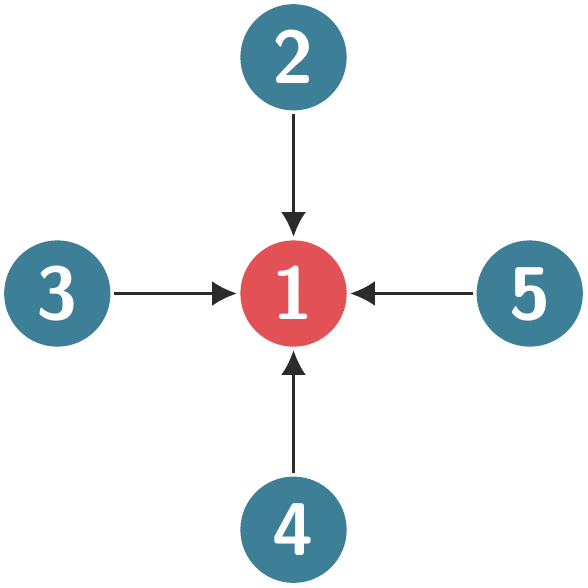}
        \caption{Star}
    \end{subfigure}
    \begin{subfigure}[t]{0.115\textwidth}
        \centering
        \includegraphics[width=.98\linewidth]{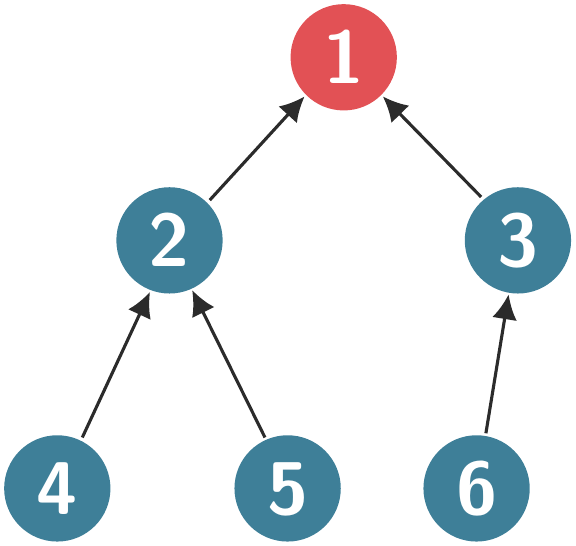}
        \caption{Tree}
    \end{subfigure}
    \begin{subfigure}[t]{0.115\textwidth}
        \centering
        \includegraphics[width=.98\linewidth]{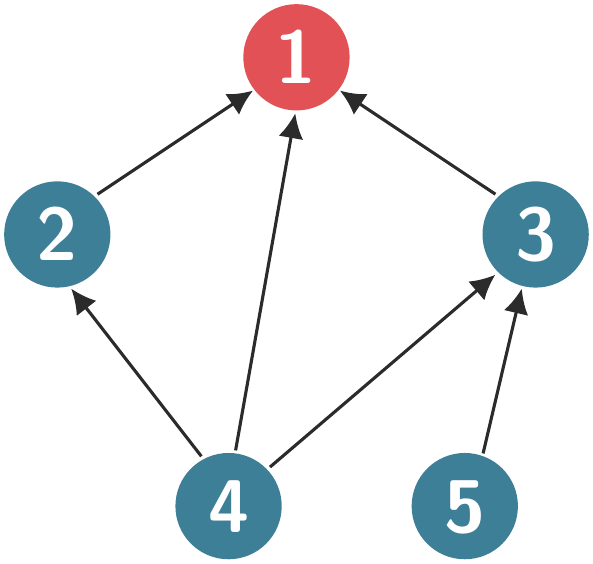}
        \caption{Mixed}
    \end{subfigure}
    \caption{Examples of leader-follower sensing digraph topologies. Agent~$1$ (red) 
    is the leader; agents $i \geq 2$ (blue) are followers. 
    The directed edge $(i,j) \in \mathcal{E}$ means that agent~$i$ senses information about agent~$j$.}
    \label{fig:graph_topologies}
\end{figure}

\section{Problem Description} \label{sec:problem_desc}

In this section, we formalize the bearing-based leader-follower formation control problem for multi-agent systems in three-dimensional space. 

Given a digraph $\mathcal{G}$, let $\bm{p}_i \in \mathbb{R}^3$, $i \in \mathcal{V}$, denote the position of each agent. The stacked vector $\bm{p} = \mathrm{col}(\bm{p}_1, \dots, \bm{p}_n) \in \mathbb{R}^{3n}$ is called a configuration of $\mathcal{G}$. The pair $(\mathcal{G},\bm{p})$ then defines a formation $\mathcal{G}(\bm{p})$ in 3D space.
Each agent has dynamics
\begin{equation} \label{eq:single_integrator}
    \dot{\bm{p}}_i = \bm{v}_i,
\end{equation}  
with $\bm{v}_i \in \R^3$ is the linear velocity of agent $i \in \mathcal{V}$.

For each follower $i \in \mathcal{V}\backslash \{1\}$, define the relative position with respect to a neighbor \(j\) as $\bm{p}_{ij} = \bm{p}_i - \bm{p}_j \in \R^3$. The corresponding bearing measurement is given by
\begin{equation} \label{eq:bearing_def}
    \bm{y}_{ij} = \frac{\bm{p}_{ij}}{\|\bm{p}_{ij}\|} \in \mathbb{S}^2.
\end{equation}
Similarly, define $\bm{v}_{ij} := \bm{v}_j - \bm{v}_i \in \R^3$ as the relative velocity between agent $i$ and $j$.

\begin{remark}
If agent $i$ has at least one bearing measurement $\bm{y}_{ij}$, $j\in \mathcal{N}_i$, that is PE according to Definition \ref{def:PE_y} or at least two non-collinear bearings, then the corresponding matrix $Q_i(t)$ is PE (see \cite[Lemma 1]{le2017observers} for details).
\end{remark}

\begin{definition} \label{def:BPE}
A formation $\mathcal{G}(\bm{p}(t))$ is called \textbf{bearing-PE} 
if, for all $i \in \mathcal{V}/\{1\}$, the matrix
\begin{align} \label{eq:BPE_condition}
    Q_i(t) := \sum_{j \in \mathcal{N}_i} \pi_{\bm{y}_{ij}(t)}
\end{align}
satisfies the PE condition in Definition~\ref{def:PE}.
\end{definition}

Consider a system of $n \geq 2$ agents interconnected through the leader-follower topology in Definition \ref{def:LF_formation}.
Define a time-varying desired formation $\mathcal{G}(\bm{p}^\star)$, where $\bm{p}^\star = \mathrm{col}(\bm{p}_1^\star, \dots, \bm{p}_n^\star) \in \R^{3n}$ and $\bm{p}_i^\star \in \R^{3}$ is the desired position of agent $i$.
The desired relative positions are denoted $\bm{p}_{ij}^\star$ and the corresponding desired bearings $\bm{y}_{ij}^\star$ are defined according to~\eqref{eq:bearing_def}.
The problem is to design a bearing-only control law that drives the relative position $\bm{p}_{ij}$ to their desired values $\bm{p}_{ij}^\star$ for all $i \in \mathcal{V}$.

\section{First-order bearing leader-follower formation control design} \label{sec:1st_order_control}

In this section, all agents are assumed to evolve according to the single integrator dynamics \eqref{eq:single_integrator}, with the velocity $\bm{v}_i$ taken as the control input.
Defining the relative position error as $\tilde{\bm{p}}_{ij} = \bm{p}_{ij} - \bm{p}_{ij}^\star \in \R^3$, $i, j \in \mathcal{V}$, its dynamics are
\begin{align} \label{eq:1st_dynamics}
    \dot{\tilde{\bm{p}}}_{ij} = \bm{v}_j - \bm{v}_j^\star - (\bm{v}_i - \bm{v}_i^\star),   
\end{align}
where $\bm{v}_i^\star$ and $\bm{v}_j^\star$ denote the desired velocities associated with $\mathcal{G}(\bm{p}^\star)$. 
The error $\tilde{\bm{p}}_{ij}$ is defined for all $i, j \in \mathcal{V}$, $i \neq j$, including pairs not belonging to $\mathcal{E}$. However, its dynamics are only considered for $(i, j) \in \mathcal{E}$.

\begin{assumption} \label{assump:BPE}
The desired formation $\mathcal{G}(\bm{p}^\star(t))$ is bearing-PE in the sense of Definition~\ref{def:BPE}, \textit{i.e.}, for each follower $i \in \mathcal{V}$,
$Q_i^\star(t) := \sum_{j \in \mathcal{N}_i} \pi_{\bm{y}_{ij}^\star(t)}$
satisfies the PE condition \eqref{eq:PE_matrix}.
\end{assumption} 

\begin{assumption} \label{assump:no_collision}
As the formation evolves in time, no inter-agent collisions or occlusions occur, such that the bearing measurements $\bm{y}_{ij}(t)$, for all $(i,j)\in\mathcal{E}$, remain well-defined for all $t \geq 0$.
\end{assumption}

The following result, taken from \cite[Theorem~2]{tang2020bearing}, establishes the exponential convergence of the follower errors under Assumptions~\ref{assump:BPE}-\ref{assump:no_collision}. 

\begin{proposition}
\label{prop:general_constant}
Consider system \eqref{eq:1st_dynamics} with the control law
\begin{align}\label{eq:control_general}
    \bm{v}_i = \bm{v}_i^\star - k_i \sum_{j \in \mathcal{N}_i} \pi_{\bm{y}_{ij}} \bm{p}_{ij}^\star,
    \qquad k_i > 0,
\end{align}
If Assumptions~\ref{assump:BPE}-\ref{assump:no_collision} hold,
then the equilibrium point $\tilde{\bm{p}}_{ij}(t) \to 0$ uniformly exponentially for all $i \in \mathcal{V}\backslash \{1\}$ and $j \in \mathcal{N}_i$. 
\end{proposition}

\begin{remark}
  The control law \eqref{eq:control_general} guarantees exponential convergence of the follower errors when the desired formation is bearing-PE. As is classical for time-varying systems stabilized through persistently exciting signals \cite{loria2002uniform}, the resulting exponential convergence rates in \cite{tang2020bearing,tang2021formation} depend on the control gains $k_i$ and on the parameters $\delta_i$ and $\mu_i$ characterizing the PE condition. 
  This explicit dependence can be restrictive since $\mu_i$ can be arbitrarily small even for a bearing-PE desired formation, and hence the guaranteed convergence rate can approach zero. This may occur, for example, for quasi-static minimal-topology desired formations whose bearings vary only marginally over the PE interval. 
\end{remark}

To decouple the convergence rate from the PE properties, we introduce a time-varying gain that evolves according to the Riccati-type equation \eqref{eq:Mi_riccati_general}, which accounts for the bearing excitation accumulated over time.
The following lemma establishes the well-posedness and uniform boundedness of this gain under the bearing-PE condition.

\begin{lemma}\label{lem:riccati_general}
Let $Q_i(t) := \sum_{j \in \mathcal{N}_i} \pi_{\bm{y}_{ij}(t)}$, and choose the matrix gain $M_i(t)$ as the solution to
\begin{equation}\label{eq:Mi_riccati_general}
    \dot{M}_i(t) = \lambda_i M_i(t) - M_i(t)\,Q_i(t)\,M_i(t), \quad M_i(0) \succ 0,
\end{equation}
with parameter $\lambda_i > 0$.
Suppose that $Q_i(t)$ is PE, \textit{i.e.}, there exist $\delta_i, \mu_i > 0$ such that $\frac{1}{\delta_i}\int_t^{t+\delta_i} Q_i(\tau)\,d\tau \succeq \mu_i I_3$, for all $t \geq 0$.
Then $M_i(t)$ is uniformly bounded, uniformly positive definite, and well-conditioned.
\end{lemma}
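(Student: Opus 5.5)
The key idea is to study the inverse $P_i(t) := M_i(t)^{-1}$ instead of $M_i$. As long as $M_i$ is invertible, $\frac{d}{dt}M_i^{-1} = -M_i^{-1}\dot M_i M_i^{-1}$. Substituting \eqref{eq:Mi_riccati_general} gives the \emph{linear} equation
\begin{equation*}
\dot P_i(t) = -\lambda_i P_i(t) + Q_i(t), \qquad P_i(0) = M_i(0)^{-1} \succ 0 .
\end{equation*}
This equation has the explicit solution
\begin{equation*}
P_i(t) = e^{-\lambda_i t}P_i(0) + \int_0^t e^{-\lambda_i (t-\tau)} Q_i(\tau)\,d\tau .
\end{equation*}
The plan is to prove all the claims on $P_i$ and then transfer them to $M_i = P_i^{-1}$.

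\textbf{Well-posedness.} The right-hand side above is well defined for all $t \ge 0$, because $Q_i$ is locally integrable; this is guaranteed by Assumption~\ref{assump:no_collision}, which keeps the bearings well defined. The solution is a sum of a positive definite term and a positive semidefinite term, so $P_i(t) \succ 0$ for every $t$. Hence $P_i^{-1}$ exists globally and solves \eqref{eq:Mi_riccati_general}, by the same computation run in reverse. By uniqueness of solutions of the locally Lipschitz Riccati ODE, $M_i(t) = P_i(t)^{-1}$ on the whole interval, so $M_i$ cannot escape in finite time.

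\textbf{Upper bound on $P_i$, i.e.\ lower bound on $M_i$.} Each projection satisfies $\pi_{\bm y} \preceq I_3$, so $Q_i \preceq m_i I_3$. Therefore
\begin{equation*}
P_i(t) \preceq \Bigl(\|P_i(0)\| + \tfrac{m_i}{\lambda_i}\Bigr) I_3 ,
\end{equation*}
which gives $M_i(t) \succeq \bigl(\|P_i(0)\| + m_i/\lambda_i\bigr)^{-1} I_3$. This is the uniform positive definiteness of $M_i$.

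\textbf{Lower bound on $P_i$, i.e.\ upper bound on $M_i$.} This is the main step and the only place where the PE condition enters.
\begin{itemize}
\item For $t \ge \delta_i$, drop the first term and restrict the integral to $[t-\delta_i, t]$. On that window $e^{-\lambda_i(t-\tau)} \ge e^{-\lambda_i\delta_i}$, so by the PE inequality
\begin{equation*}
P_i(t) \succeq e^{-\lambda_i \delta_i}\int_{t-\delta_i}^{t} Q_i\,d\tau \succeq \delta_i\mu_i e^{-\lambda_i\delta_i} I_3 .
\end{equation*}
\item For $t \in [0,\delta_i]$, the first term alone gives $P_i(t) \succeq e^{-\lambda_i \delta_i}\lambda_{\min}(P_i(0)) I_3$.
\end{itemize}
Combining the two cases, $P_i(t) \succeq e^{-\lambda_i\delta_i}\min\{\delta_i\mu_i,\ \lambda_{\min}(P_i(0))\}\, I_3$ for all $t \ge 0$. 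Consequently $\|M_i(t)\|$ is bounded by the reciprocal of this constant.

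\textbf{Conditioning.} The upper and lower bounds together give
\begin{equation*}
\underline m\, I_3 \preceq M_i(t) \preceq \overline m\, I_3 \quad \text{for all } t \ge 0,
\end{equation*}
with constants $\underline m, \overline m > 0$. Hence the condition number of $M_i(t)$ is uniformly bounded by $\overline m/\underline m$, which is the well-conditioning claim.

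The main subtlety is the lower bound on $P_i$, where the PE window has to be handled carefully for $t < \delta_i$, which is why the initial condition is used there. Note also that the PE hypothesis is needed only for the upper bound on $M_i$; the lower bound on $M_i$ holds without it.
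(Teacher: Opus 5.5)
Your proof is correct and takes the paper's route: pass to $P_i=M_i^{-1}$, which satisfies the linear filter $\dot P_i=-\lambda_i P_i+Q_i$, bound $P_i$ above using $Q_i\preceq m_iI_3$ and below using the PE window $[t-\delta_i,t]$, then invert. Your version is more careful than the paper's in three places: the paper's constants in \eqref{eq:M_bounds} drop the $\|P_i(0)\|$ contribution and use a window that only exists for $t\ge\delta_i$, which your $\|P_i(0)\|$ term and $\min\{\delta_i\mu_i,\lambda_{\min}(P_i(0))\}$ correctly account for, and you also justify global existence of $M_i$ where the paper only notes $P_i\succeq 0$.
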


\begin{proof}
Uniform boundedness of $M_i(t)$ implies that there exist 
$\underline{m}_i, \overline{m}_i > 0$ such that for all $t \geq 0$, $\underline{m}_i\,I_3 \preceq M_i(t) \preceq \overline{m}_i\,I_3$.
To analyze the bounds of $M_i$, define $P_i := M_i^{-1}$. Differentiating and substituting \eqref{eq:Mi_riccati_general} gives
\begin{equation} \label{eq: P_ode}
    \dot{P}_i = -\lambda_i P_i + Q_i(t),
\end{equation} 
which yields the solution
\begin{equation} \label{eq:Pi_solution}
     P_i(t) = e^{-\lambda_i t}P_i(0) + \int_0^t e^{-\lambda_i(t-\tau)} Q_i(\tau)\,d\tau.
\end{equation}
This shows that $P_i(t)$ is at least semi-definite positive.

\noindent 
1) Upper bound: Since $Q_i(t) \preceq m_i I_3$ with $m_i = |\mathcal{N}_i|$ 
(as $\pi_{\bm{y}_{ij}} \preceq I_3$ for all $j \in \mathcal{N}_i$), we have
\begin{align*}
    P_i(t) &\preceq e^{-\lambda_i t}P_i(0) 
    + \frac{m_i}{\lambda_i}(1 - e^{-\lambda_i t}) \\
    &\preceq \max\left(\|P_i(0)\|, \frac{m_i}{\lambda_i}\right) I_3.
\end{align*}
\noindent
2) Lower bound: Using \eqref{eq:Pi_solution} over $[t-\delta_i, t]$:
\begin{align*}
    P_i(t) &= e^{-\lambda_i\delta_i}P_i(t-\delta_i) 
    + \int_{t-\delta_i}^{t} e^{-\lambda_i(t-\tau)} Q_i(\tau)\,d\tau \\
    &\succeq e^{-\lambda_i\delta_i}\int_{t-\delta_i}^{t} Q_i(\tau)\,d\tau
    \succeq e^{-\lambda_i\delta_i}\,\delta_i\mu_i\,I_3,
\end{align*}
where we used $P_i(t-\delta_i) \succeq 0$ and the PE condition on $Q_i$.
Thus $\frac{1}{\overline{m}_i} I_3 \preceq P_i(t) \preceq \frac{1}{\underline{m}_i} I_3$, with $\underline{m}_i = \lambda_i / m_i$ and $\overline{m}_i = e^{\lambda_i\delta_i}/(\delta_i\mu_i)$,
which implies that
\begin{equation} \label{eq:M_bounds}
    \frac{\lambda_i}{m_i}I_3 \preceq M_i(t) \preceq \frac{e^{\lambda_i\delta_i}}{\delta_i\mu_i}I_3, 
    \qquad \forall\, t \geq 0.
\end{equation}
Thus, $M_i(t)$ is uniformly lower and upper bounded and uniformly positive definite.
\end{proof}

\begin{remark}
An interpretation of the gain $M_i(t)$ follows from its inverse. It is clear that \eqref{eq: P_ode} corresponds to a first-order low-pass filter of the local bearing Gramian, with forgetting rate $\lambda_i$.  The PE condition on $Q_i(t)$ ensures that $P_i(t)$ remains uniformly positive definite, which in turn guarantees that $M_i(t):= P_i^{-1}(t)$ is well defined for all $t\geq0$.

\end{remark}

We now use the gain characterized in Lemma~\ref{lem:riccati_general} to define the control law \eqref{eq:control_riccati_general}. While the bounds on the gain depend on the bearing-PE properties, we show in the proof of Theorem~\ref{thm:riccati_general} that the convergence rate is prescribed by the design parameter $\lambda_i$, which can be independently selected.

\begin{theorem} \label{thm:riccati_general}
Suppose that Assumption \ref{assump:BPE} holds. Consider system \eqref{eq:1st_dynamics} with the control law
\begin{align}\label{eq:control_riccati_general}
    \bm{v}_i = \bm{v}_i^\star - M_i(t)\sum_{j \in \mathcal{N}_i}\pi_{\bm{y}_{ij}}\,\bm{p}_{ij}^\star,
\end{align}
where $M_i(t)$ is solution of \eqref{eq:Mi_riccati_general} with $Q_i(t) = \sum_{l\in\mathcal{N}_i}
\pi_{\bm{y}_{il}(t)}$. 
Then, the equilibrium point $\tilde{\bm{p}}_{ij} = 0$ is uniformly 
exponentially stable and $M_i$ is well-defined and uniformly bounded for all $i \in \mathcal{V}\backslash \{1\}$ and $j \in \mathcal{N}_i$.
\end{theorem}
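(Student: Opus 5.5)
The plan is to rewrite the closed loop in error coordinates, use $P_i=M_i^{-1}$ from Lemma~\ref{lem:riccati_general} as the weight of a Lyapunov function, and propagate convergence along the acyclic leader-follower order.

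\textbf{Error coordinates.} The key algebraic identity is $\pi_{\bm{y}_{ij}}\bm{p}_{ij}=0$. It gives $\pi_{\bm{y}_{ij}}\bm{p}_{ij}^\star=-\pi_{\bm{y}_{ij}}\tilde{\bm{p}}_{ij}$, so \eqref{eq:control_riccati_general} becomes a feedback of the relative errors through the projected Gramian. Since the leader is not controlled by \eqref{eq:control_riccati_general}, I take $\bm{v}_1=\bm{v}_1^\star$. I then introduce absolute errors measured with respect to the leader, $\bm{e}_i:=\tilde{\bm{p}}_{i1}$, so that $\tilde{\bm{p}}_{il}=\bm{e}_i-\bm{e}_l$ (up to the sign convention of \eqref{eq:1st_dynamics}). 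Using \eqref{eq:1st_dynamics}, the closed loop becomes
\[
\dot{\bm{e}}_i=-M_iQ_i\bm{e}_i+M_i\sum_{l\in\mathcal{N}_i}\pi_{\bm{y}_{il}}\bm{e}_l,
\]
with $\bm{e}_1\equiv 0$. Because the graph is acyclic with $\mathcal{N}_i\subseteq\{1,\dots,i-1\}$, this is a cascade, and I will argue by induction on $i$.

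\textbf{Lyapunov step.} Take $V_i=\bm{e}_i^\top P_i\bm{e}_i$ with $P_i=M_i^{-1}$ satisfying \eqref{eq: P_ode}. Then
\[
\dot V_i=-\lambda_iV_i+\bm{e}_i^\top Q_i\bm{e}_i-2\bm{e}_i^\top Q_i\bm{e}_i+2\bm{e}_i^\top\sum_l\pi_{\bm{y}_{il}}\bm{e}_l .
\]
By Young's inequality with the projections, $2\bm{e}_i^\top\pi\bm{e}_l\le \bm{e}_i^\top\pi\bm{e}_i+\bm{e}_l^\top\pi\bm{e}_l$. This gives
\[
\dot V_i\le-\lambda_iV_i+\sum_{l\in\mathcal{N}_i}\|\bm{e}_l\|^2 .
\]
For agent $2$ the input term vanishes, so $V_2(t)=e^{-\lambda_2t}V_2(0)-\int(\cdot)\le e^{-\lambda_2 t}V_2(0)$. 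Inductively, each $V_i$ is driven by exponentially decaying inputs, so it decays at any rate below $\min_{k\le i}\lambda_k$. Converting $V_i$ into $\|\bm{e}_i\|$ costs only the condition number bounds \eqref{eq:M_bounds}. This would give uniform exponential stability with a rate fixed by the $\lambda_i$, while the PE constants enter only in the overshoot constant. Then $\tilde{\bm{p}}_{ij}=\bm{e}_i-\bm{e}_j\to0$.

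\textbf{Main obstacle.} Lemma~\ref{lem:riccati_general} needs the \emph{measured} $Q_i$ to be PE, but Assumption~\ref{assump:BPE} only concerns $Q_i^\star$. Step (iii) is therefore circular unless excitation is transferred from the desired bearings to the actual ones. I would first use the unconditional bound $P_i(t)\succeq e^{-\lambda_it}P_i(0)$ from \eqref{eq:Pi_solution}. This shows that $M_i$ exists on every finite horizon, so the solutions are well-posed. For agent $2$ it also gives
\[
e^{-\lambda_2t}\underline{p}\,\|\bm{e}_2\|^2\le V_2\le e^{-\lambda_2t}V_2(0),
\]
so $\bm{e}_2$ is bounded. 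Integrating $\dot V_2=-\lambda_2V_2-\bm{e}_2^\top Q_2\bm{e}_2$ after multiplying by $e^{\lambda_2t}$ gives square-integrability of $e^{\lambda_2t/2}\|\pi_{\bm{y}_{21}}\bm{e}_2\|$. Since $\pi_{\bm{y}}\bm{e}=-\pi_{\bm{y}}\bm{p}^\star$, this forces $\pi_{\bm{y}_{21}}\bm{p}_{21}^\star\to0$, and hence $\bm{y}_{21}\to\pm\bm{y}_{21}^\star$. The sign ambiguity would be excluded using Assumption~\ref{assump:no_collision} together with continuity and boundedness of $\bm{e}_2$. Because PE is robust to vanishing perturbations, $Q_2$ is PE on $[T,\infty)$ for some $T$, and Lemma~\ref{lem:riccati_general} applies from $T$ on. The same argument then repeats down the cascade. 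I expect this excitation transfer, and in particular the sign argument with a uniform $T$, to be the delicate part of the proof.
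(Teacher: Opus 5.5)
Your plan is sound and takes a different route from the paper. The paper first proves convergence of $\tilde{\bm p}_{21}$ before it knows that $M_2$ is bounded. It uses boundedness of $\bm p_{21}$ and the identity \eqref{eq:y_ystar_identity} to rewrite the dissipation as $-\tfrac{\gamma_2}{2}\tilde{\bm p}_{21}^\top\pi_{\bm y_{21}^\star}\tilde{\bm p}_{21}$, and invokes Lor\'ia's PE lemma with Assumption~\ref{assump:BPE}. It then transfers excitation to the measured bearings via Lemma~\ref{lem:piy_PE}, and only afterwards applies Lemma~\ref{lem:riccati_general} to get $\dot V_2\le-\lambda_2V_2$ with uniform bounds. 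Agents $i\ge3$ are treated as unforced systems $\dot{\tilde{\bm p}}_{ij}=-M_iQ_i\tilde{\bm p}_{ij}$ driven by vanishing perturbations. You instead extract excitation directly from the weighted dissipation identity $\tfrac{d}{dt}(e^{\lambda_2t}V_2)=-e^{\lambda_2t}\bm e_2^\top Q_2\bm e_2$, and you handle the cascade with a Young inequality in which $M_i$ cancels against $P_i$. This buys real robustness:
\begin{itemize}
\item you need neither $\gamma_i$ nor $\sup_t\|\bm p_{il}\|$;
\item you never feed a Lyapunov/PE lemma a weight $\Sigma_i$ not yet known to be uniformly positive definite;
\item you never need perturbation terms containing gains, such as $M_3\pi_{\bm y_{32}}\tilde{\bm p}_{21}$, to be bounded before excitation of $Q_i$ is established;
\item you get the honest cascade rate $\min_{k\le i}\lambda_k$, whereas \eqref{eq:convergence_rate_1st} with $\lambda_i$ alone is derived only for the unforced subsystem.
\end{itemize}
The paper's route is shorter because it leans on off-the-shelf PE results.

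Your excitation-transfer step needs three repairs, none fatal.

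\textbf{First repair.} Square-integrability of $e^{\lambda_2t/2}\|\pi_{\bm y_{21}}\bm e_2\|$ does not by itself give $\pi_{\bm y_{21}}\bm p_{21}^\star\to0$ pointwise. Barbalat would need uniform continuity, which involves $\dot{\bm e}_2=-M_2\pi_{\bm y_{21}}\bm e_2$ and $1/\|\bm p_{21}\|$, and neither is controlled a priori. You do not need a pointwise limit. Integrating the weighted identity over a window gives $\int_t^{t+\delta_2}\|\pi_{\bm y_{21}}\bm p_{21}^\star\|^2d\tau\le e^{-\lambda_2t}V_2(0)$. Since $\|\pi_{\bm y}-\pi_{\bm y^\star}\|=\|\bm y\times\bm y^\star\|=\|\pi_{\bm y}\bm y^\star\|$, Cauchy--Schwarz yields
\[
\frac{1}{\delta_2}\int_t^{t+\delta_2}\pi_{\bm y_{21}(\tau)}\,d\tau\succeq\Big(\mu_2-\big(e^{-\lambda_2t}V_2(0)/(\delta_2\underline{r}^2)\big)^{1/2}\Big)I_3,\qquad \underline{r}:=\inf_{t\ge0}\|\bm p_{21}^\star(t)\|>0 .
\]
Here $\underline{r}>0$ is a standing assumption you must state, as the paper does. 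So $Q_2$ is PE with constant $\mu_2/2$ after an explicit time $T$.

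\textbf{Second repair.} The sign ambiguity is a non-issue, because $\pi_{-\bm y}=\pi_{\bm y}$. The transfer never needs $\bm y_{21}\to+\bm y_{21}^\star$; that follows a posteriori from $\bm e_2\to0$. Boundedness and continuity alone could not exclude the antipodal branch in any case.

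\textbf{Third repair.} For $i\ge3$, your Young inequality consumes all of $-\bm e_i^\top Q_i\bm e_i$, so nothing is left for the transfer. Use instead $2\bm e_i^\top\pi_{\bm y_{il}}\bm e_l\le\tfrac12\bm e_i^\top\pi_{\bm y_{il}}\bm e_i+2\bm e_l^\top\pi_{\bm y_{il}}\bm e_l$. This gives $\int_t^{t+\delta_i}\bm e_i^\top Q_i\bm e_i\,d\tau\le2V_i(t)+4\int_t^{t+\delta_i}\sum_{l\in\mathcal N_i}\|\bm e_l\|^2d\tau\to0$. Combine it with $\|\pi_{\bm y_{il}}\bm p_{il}^\star\|^2\le2\|\pi_{\bm y_{il}}\bm e_i\|^2+2\|\bm e_l\|^2$. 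Boundedness of $\bm e_i$ is not needed, and your crude bound from $P_i\succeq e^{-\lambda_it}P_i(0)$ would not guarantee it when $\lambda_i$ exceeds the predecessors' rates.

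\textbf{On uniformity.} $T$ depends logarithmically on $V_i(0)$, so the exponential estimate is uniform only over bounded sets of initial errors. The paper's $t_0$ from Lemma~\ref{lem:piy_PE} carries exactly the same dependence, so this is not a weakness of your route relative to the paper's.
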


\begin{proof} 
Consider the storage function $V_i = \frac{1}{2}\tilde{\bm{p}}_i^\top \Sigma_i \tilde{\bm{p}}_i$, where $\Sigma_i$ is a bounded positive semi-definite solution of 
\begin{equation} \label{eq:sigma_ode}
    \dot{\Sigma}_i = -\lambda_i \Sigma_i + Q_i, \quad \Sigma_i(0) := M_i^{-1}(0) \succ 0,
\end{equation}
which satisfies $\Sigma_i(t) M_i(t) = M_i(t) \Sigma_i(t) = I_3$ for all $t \geq 0$. We proceed by mathematical induction.

($i=2$): Agent 2 has $\mathcal{N}_2 = \{1\}$ with $\tilde{\bm{p}}_1=0$ and $\bm{v}_1 = \bm{v}_1^\star$.
Thus, under the control law in~\eqref{eq:control_riccati_general}, the error dynamics in~\eqref{eq:1st_dynamics} reduce to
$ \dot{\tilde{\bm p}}_{21} = -M_2\pi_{\bm y_{21}}\tilde{\bm p}_{21}.$
First, notice that $\tilde{\bm p}_{21}$ is bounded. Indeed,
using \eqref{eq:sigma_ode} and $\Sigma_2M_2=I_3$ yields
\begin{align*}
    \frac{d}{dt}
    \big(\Sigma_2\tilde{\bm p}_{21}\big)
    &= - \lambda_2\Sigma_2\tilde{\bm p}_{21} + \pi_{\bm y_{21}} \tilde{\bm p}_{21} - \Sigma_2 M_2\pi_{\bm y_{21}}\tilde{\bm p}_{21} \\
    &=
    -\lambda_2\Sigma_2\tilde{\bm p}_{21}.
\end{align*}
Since $\Sigma_2(t)= e^{-\lambda_2t} \left( \Sigma_2(0) + \int_0^t e^{\lambda_2\tau} \pi_{\bm y_{21}(\tau)}\,d\tau \right)$,
we have
\[
    \tilde{\bm p}_{21}(t)
    =
    \left(
    \Sigma_2(0)
    +
    \int_0^t e^{\lambda_2\tau}
    \pi_{\bm y_{21}(\tau)}\,d\tau
    \right)^{-1}
    \Sigma_2(0)\tilde{\bm p}_{21}(0),
\]
which is bounded since the matrix in parentheses is lower bounded by
$\Sigma_2(0)\succ0$.

Differentiating $V_2$ and substituting \eqref{eq:sigma_ode} yields
\begin{align}
    \dot{V}_2
    &= \tfrac{1}{2} \dot{\tilde{\bm{p}}}_{21}^\top
       \Sigma_2\tilde{\bm{p}}_{21}
     + \tfrac{1}{2}\tilde{\bm{p}}_{21}^\top
       \Sigma_2\dot{\tilde{\bm{p}}}_{21}
     + \tfrac{1}{2}\tilde{\bm{p}}_{21}^\top
       \dot{\Sigma}_2\tilde{\bm{p}}_{21} \notag\\
    &= -\tfrac{\lambda_2}{2}
       \tilde{\bm{p}}_{21}^\top\Sigma_2\tilde{\bm{p}}_{21}
       -\tfrac{1}{2}
       \tilde{\bm{p}}_{21}^\top
       \pi_{\bm{y}_{21}}\tilde{\bm{p}}_{21}\notag\\
    &\leq
       -\tfrac{1}{2}
       \tilde{\bm{p}}_{21}^\top
       \pi_{\bm{y}_{21}}\tilde{\bm{p}}_{21}.
    \label{eq:Vdot_M_1st}
\end{align}
Since $\pi_{\bm y_{21}}\bm p_{21}=0$, we obtain
$\dot V_2
    \leq
    -\frac{1}{2}
    \bm p_{21}^{\star\top}
    \pi_{\bm y_{21}}\bm p_{21}^\star
    =
    -\frac{1}{2}\|\bm p_{21}^\star\|^2
    \bm y_{21}^{\star\top}
    \pi_{\bm y_{21}}\bm y_{21}^\star$.
Moreover, using
\begin{equation}\label{eq:y_ystar_identity}
    \bm y_{21}^{\star\top}\pi_{\bm y_{21}}\bm y_{21}^\star
    =
    \|\bm y_{21}^\star\times\bm y_{21}\|^2
    =
    \bm y_{21}^{\top}\pi_{\bm y_{21}^\star}\bm y_{21},
\end{equation}
it follows that
\[
    \dot V_2
    \leq
    -\frac{1}{2}
    \frac{\|\bm p_{21}^\star\|^2}{\|\bm p_{21}\|^2}
    \tilde{\bm p}_{21}^\top
    \pi_{\bm y_{21}^\star}
    \tilde{\bm p}_{21}.
\]
Since $\tilde{\bm p}_{21}$ is bounded,
$\bm p_{21}=\bm p_{21}^\star+\tilde{\bm p}_{21}$ is bounded.
Assuming that $\bm p_{21}^\star$ is bounded away from zero, define
\begin{equation}\label{eq:gamma2}
    \gamma_2
    := \frac{\inf_{t\geq0}\|\bm p_{21}^\star(t)\|^2}{\sup_{t\geq0}\|\bm p_{21}(t)\|^2 } >0.
\end{equation}
Hence,
$\dot V_2
    \leq
    -\frac{\gamma_2}{2}
    \tilde{\bm p}_{21}^\top
    \pi_{\bm y_{21}^\star}
    \tilde{\bm p}_{21}
    \leq 0$.
By Assumption~\ref{assump:BPE} and \cite[Lemma~5]{loria2002uniform},
it follows that $\tilde{\bm p}_{21}\to0$ uniformly exponentially.
Lemma~\ref{lem:piy_PE} then implies that there exists $t_0>0$
such that $\bm y_{21}$ is PE for all $t\geq t_0$.
Thus, by Lemma~\ref{lem:riccati_general}, $M_2(t)$ is uniformly
bounded and uniformly positive definite for all $t\geq t_0$.
Then, $\frac{1}{\overline{m}_2}I_3 \preceq M_2^{-1}(t) = \Sigma_2(t)
    \preceq \frac{1}{\underline{m}_2}I_3$.
Hence, $V_2$ is uniformly positive definite and radially unbounded in $\tilde{\bm{p}}_{21}$, and therefore a Lyapunov function.
Moreover, from the expression of $\dot{V}_2$, we have
$\dot{V}_2 \leq -\frac{\lambda_2}{2}\tilde{\bm{p}}_{21}^\top \Sigma_2 \tilde{\bm{p}}_{21} = -\lambda_2\,V_2$, which yields $V_2(t) \leq e^{-\lambda_2 (t-t_0)} V_2(t_0)$ for all $t \geq t_0$. Finally, since 
$\frac{1}{\overline{m}_2} \|\tilde{\bm{p}}_{21}\|^2 \leq V_2 \leq \frac{1}{\underline{m}_2} \|\tilde{\bm{p}}_{21}\|^2$, we conclude that 
$\|\tilde{\bm{p}}_{21}(t)\| \leq \sqrt{\frac{\overline{m}_2}{\underline{m}_2}}e^{-\lambda_2 (t-t_0)/2} \|\tilde{\bm{p}}_{21}(t_0)\|$.

\noindent
($i=3$):
Assume that $\tilde{\bm{p}}_{21}(t)\to 0$ uniformly exponentially from the case $i=2$. For agent $3$, we have $\mathcal{N}_3 \subseteq \{1,2\}$. Using \eqref{eq:2nd_err_dynamics} and \eqref{eq:2nd_control_law}, the closed-loop system for $\tilde{\bm{p}}_{3j}$ ($j \in \mathcal{N}_3$) can be written as
\begin{equation}\label{eq:dot_p3j_full}
\dot{\tilde{\bm{p}}}_{3j}
= - M_3 Q_3\tilde{\bm{p}}_{3j} + \bm{d}_{3j}(t),
\end{equation} 
with $Q_3 = \sum_{l\in\mathcal{N}_3}\pi_{\bm{y}_{3l}}$ and $\bm{d}_{3j}(t)$ a bounded function that, depending on the graph, satisfies
\begin{itemize}
    \item[$\bullet$] $\mathcal{N}_3 = \{1\}$: $\bm{d}_{31} = 0$, $\dot{\tilde{\bm{p}}}_{31}
= - M_3\pi_{\bm{y}_{31}}\tilde{\bm{p}}_{31}$, and the result follows directly from the case $i=2$. 
    \item[$\bullet$] $\mathcal{N}_3 = \{2\}$: $\bm{d}_{31} = M_2\pi_{\bm{y}_{21}} \tilde{\bm{p}}_{21} \to 0$.
    \item[$\bullet$] $\mathcal{N}_3 = \{1,2\}$: $\bm{d}_{31} = - M_3\pi_{\bm{y}_{32}} \tilde{\bm{p}}_{21}$, and $\bm{d}_{32} = (M_2\pi_{\bm{y}_{21}} - M_3\pi_{\bm{y}_{31}}) \tilde{\bm{p}}_{21} \to 0$.
\end{itemize}
Therefore, system \eqref{eq:dot_p3j_full} can be interpreted as a cascaded system that has $\tilde{\bm{p}}_{21}$ as input to the unforced system
\begin{equation}\label{eq:dot_p3j}
\dot{\tilde{\bm{p}}}_{3j}
= - M_3(t) Q_3(t) \tilde{\bm{p}}_{3j}, \qquad j \in \mathcal{N}_3.
\end{equation}
Consider the storage function 
$V_3=\frac{1}{2}\tilde{\bm{p}}_{3j}^{\top}\Sigma_3\tilde{\bm{p}}_{3j}$ for the unforced system \eqref{eq:dot_p3j}.
Its time derivative verifies 
\begin{equation*}
\begin{aligned}
\dot{V}_3
&= \tfrac{1}{2} \dot{\tilde{\bm{p}}}_{3j}^\top \Sigma_3\tilde{\bm{p}}_{3j} + \tfrac{1}{2}\tilde{\bm{p}}_{3j}^\top \Sigma_3\dot{\tilde{\bm{p}}}_{3j} 
     + \tfrac{1}{2}\tilde{\bm{p}}_{3j}^\top \dot{\Sigma}_3\tilde{\bm{p}}_{3j} \notag \\
&= -\tfrac{\lambda_3}{2}\tilde{\bm{p}}_{3j}^{\top}\Sigma_3\tilde{\bm{p}}_{3j} -\tfrac{1}{2}\tilde{\bm{p}}_{3j}^{\top}Q_3 \tilde{\bm{p}}_{3j} \leq -\tfrac{1}{2}\tilde{\bm{p}}_{3j}^{\top}Q_3\tilde{\bm{p}}_{3j}.
\end{aligned}
\end{equation*}
Therefore, $\dot{V}_3\leq 0$, and the unforced system is bounded. 
Since $\tilde{\bm{p}}_{3j} = \tilde{\bm{p}}_{3l} + \tilde{\bm{p}}_{lj}$, $l \neq j$, $j \in \{1, 2\}$ and $\tilde{\bm{p}}_{21} = 0$ in the unforced system \eqref{eq:dot_p3j}, it follows that $\tilde{\bm{p}}_{3l} = \tilde{\bm{p}}_{3j}$.
Using the identity
$\bm{y}_{3l}^{\star\top}\pi_{\bm{y}_{3l}}\bm{y}_{3l}^\star =\bm{y}_{3l}^{\top}\pi_{\bm{y}_{3l}^\star}\bm{y}_{3l}$, we obtain
\begin{align*}
\dot{V}_3 \leq - \tfrac{1}{2}
\tilde{\bm{p}}_{3j}^{\top}\sum_{l\in\mathcal{N}_3}\tfrac{\|\bm{p}_{3l}^{\star}\|^2}{\|\bm{p}_{3l}\|^2}\pi_{\bm{y}_{3l}^\star}\tilde{\bm{p}}_{3j}
\leq - \tfrac{\gamma_3}{2}
\tilde{\bm{p}}_{3j}^{\top}
Q_3^\star
\tilde{\bm{p}}_{3j},
\end{align*}
where $\gamma_3 := \min_{l\in\mathcal{N}_3}
    \frac{\inf_{t\geq0}
        \|\bm p_{3l}^\star(t)\|^2 }{ \sup_{t\geq0}
        \|\bm p_{3l}(t)\|^2 } >0.$
Using the same argument as in Case $i=2$, the equilibrium $\tilde{\bm{p}}_{3j}=0$ of the unforced system is uniformly exponentially stable, and $\|\tilde{\bm{p}}_{3j}(t)\| \leq \sqrt{\frac{\overline{m}_3}{\underline{m}_3}}e^{-\lambda_3 (t-t_0)/2} \|\tilde{\bm{p}}_{3j}(t_0)\|$.
Finally, since the full system \eqref{eq:dot_p3j_full} is a cascade driven by $\tilde{\bm{p}}_{21}$, which is already exponentially convergent to zero, it implies that $\tilde{\bm{p}}_{3j}\to 0$ uniformly exponentially.

\noindent
($i\geq 4$):
Assume that the claim holds for all $2\leq k\leq i-1$, that is,
$\tilde{\bm{p}}_{kj}(t)\to 0$ uniformly exponentially for all $j\in\mathcal{N}_k$ and all $2\leq k\leq i-1$.
Recall that, under the control law in~\eqref{eq:control_riccati_general}, the closed-loop system for the states
$\tilde{\bm{p}}_{ij}$, $j\in\mathcal{N}_i$, can be written as
\begin{align}
\dot{\tilde{\bm{p}}}_{ij}
= -M_i \pi_{\bm{y}_{ij}}\tilde{\bm{p}}_{ij} -M_i\sum_{ l\in\mathcal{N}_i\backslash\{j\}}\pi_{\bm{y}_{il}}\tilde{\bm{p}}_{il}
+\bm{v}_j-\bm{v}_j^\star,
\label{eq:gen_i_closedloop}
\end{align}
where $\bm{v}_j$ is a function of the error variables $\tilde{\bm{p}}_{km}$, $2\leq k\leq i-1$, $m\in\mathcal{N}_k$, $\tilde{p}_{il} = \tilde{p}_{ij} + \tilde{p}_{jl}$. Note that since the graph is connected, $\tilde{p}_{jl}$ can be represented by the error variables $\tilde{\bm{p}}_{km}$, $2\leq k\leq i-1$, $m\in\mathcal{N}_k$. Hence, the dynamics~\eqref{eq:gen_i_closedloop} can then be considered as a cascaded
system with inputs $\tilde{\bm{p}}_{km}$, $2\leq k\leq i-1$, $m\in\mathcal{N}_k$, of the unforced system
\begin{align} \dot{\tilde{\bm{p}}}_{ij}
=-M_i Q_i \tilde{\bm{p}}_{ij},
\qquad j\in\mathcal{N}_i.
\label{eq:gen_i_unforced}
\end{align}
Consider the storage function $V_i=\frac{1}{2}\tilde{\bm{p}}_{ij}^{\top}M_i^{-1}\tilde{\bm{p}}_{ij}$.
Its derivative along~\eqref{eq:gen_i_unforced} is
\begin{align}
\dot{V}_i
&=
-\tfrac{\lambda_i}{2}\tilde{\bm{p}}_{ij}^{\top}\Sigma_i \tilde{\bm{p}}_{ij}
-\tfrac{1}{2}\tilde{\bm{p}}_{ij}^{\top}Q_i \tilde{\bm{p}}_{ij} \leq -\tfrac{1}{2}\tilde{\bm{p}}_{ij}^{\top}Q_i \tilde{\bm{p}}_{ij}.
\label{eq:Vdot_i_general}
\end{align}
Therefore, $\dot{V}_i\leq 0$. Using similar arguments to the case $i=3$, $\tilde{\bm{p}}_{ij} = \tilde{\bm{p}}_{il} + \tilde{\bm{p}}_{lj}$, $l \in \mathcal{N}_i\backslash\{j\}$ and $\tilde{\bm{p}}_{lj} = 0$ in the unforced system \eqref{eq:gen_i_unforced}, it follows that $\tilde{\bm{p}}_{ij} = \tilde{\bm{p}}_{il}$.
Then, using the identity
$\bm{y}_{il}^{\star\top}\pi_{\bm{y}_{il}}\bm{y}_{il}^\star
=\bm{y}_{il}^{\top}\pi_{\bm{y}_{il}^\star}\bm{y}_{il}$, 
we obtain
\begin{align*}
\dot{V}_i
\leq - \tfrac{1}{2}
\tilde{\bm{p}}_{ij}^{\top}\sum_{l\in\mathcal{N}_i}\tfrac{\|\bm{p}_{il}^{\star}\|^2}{\|\bm{p}_{il}\|^2}\pi_{\bm{y}_{il}^\star}\tilde{\bm{p}}_{ij}
\leq -\tfrac{\gamma_i}{2}
\tilde{\bm{p}}_{ij}^{\top}
Q^\star_i
\tilde{\bm{p}}_{ij},
\end{align*}
where $\gamma_i := \min_{l\in\mathcal{N}_i} \frac{\inf_{t\geq0}\|\bm{p}_{il}^\star(t)\|^2}{\sup_{t\geq0}\|\bm{p}_{il}(t)\|^2} >0.$
From there, we conclude that the equilibrium $\tilde{\bm{p}}_{ij}=0$ of the unforced system~\eqref{eq:gen_i_unforced}
is uniformly exponentially stable, and 
\begin{equation} \label{eq:convergence_rate_1st}
    \|\tilde{\bm{p}}_{ij}(t)\| \leq \sqrt{\frac{\overline{m}_i}{\underline{m}_i}}e^{-\lambda_i (t-t_0)/2} \|\tilde{\bm{p}}_{ij}(t_0)\|
\end{equation} 
for all $i\in \mathcal{V}\backslash\{1\}$ and $j \in \mathcal{N}_i$.
Since $\bm{v}_j-\bm{v}_j^\star$ is uniformly exponentially vanishing, the cascaded system~\eqref{eq:gen_i_closedloop} is also uniformly exponentially stable.
Hence, $\tilde{\bm{p}}_{ij}(t)\to 0$ uniformly exponentially for all $i\in \mathcal{V}\backslash\{1\}$ and $j\in\mathcal{N}_i$.
\end{proof}

The bound in \eqref{eq:convergence_rate_1st} shows that, unlike for the fixed gain design \eqref{eq:control_general}, the exponential convergence rate for agent $i$ is determined by the parameter $\lambda_i$ involved in \eqref{eq:Mi_riccati_general}, whereas $\delta_i$ and $\mu_i$ characterize the upper bound on $M_i(t)$ given in~\eqref{eq:M_bounds}, and are hence reflected in the constant $\sqrt{\overline m_i/\underline m_i}$.

\section{Second-order double integrator dynamics} \label{sec:2nd_order_control}

The design presented in Section \ref{sec:1st_order_control} assumes that each agent can 
directly control its velocity. In practice, many robotic platforms have the control input appearing at the acceleration level. To address this more realistic setting, we extend the proposed design to agents governed by double-integrator motion dynamics.
Consider the formation $\mathcal{G}(\bm{p})$, where each agent $i \in \mathcal{V}$ evolves according to:
\begin{align} \label{eq:2nd_dynamics}
    \dot{\bm{p}}_i &= \bm{v}_i, &
    \dot{\bm{v}}_i &= \bm{u}_i, 
\end{align}
where $u_i \in \R^3$ is the acceleration control input.
Consider a desired trajectory $(\bm{p}_{i}^\star(t),\bm{v}_{i}^\star(t),\bm{u}_i^\star(t))$ satisfying $\dot{\bm{p}}_{i}^\star = \bm{v}_{i}^\star$, $\dot{\bm{v}}_{i}^\star = \bm{u}_i^\star$.
Define the relative position and velocity errors as
\begin{align}  \label{eq:2nd_errors}
    \tilde{\bm{p}}_{ij} &:= \bm{p}_{ij} - \bm{p}_{ij}^\star,\qquad
    \tilde{\bm{v}}_{ij} := \bm{v}_{ij} - \bm{v}_{ij}^\star.
\end{align}
From \eqref{eq:2nd_dynamics} and \eqref{eq:2nd_errors}, the error dynamics satisfy
\begin{equation}
\begin{cases}
    \dot{\tilde{\bm{p}}}_{ij} &= \tilde{\bm{v}}_{ij}, \\
    \dot{\tilde{\bm{v}}}_{ij} &= \bm{u}_j - \bm{u}_j^\star - (\bm{u}_i - \bm{u}_i^\star).   
\end{cases}
    \label{eq:2nd_err_dynamics} 
\end{equation}

\begin{assumption} \label{assump:2}
The desired acceleration $\bm{u}^\star_i(t)$ and the desired
relative velocity $\bm{v}^\star_{ij}(t)$ are bounded for all $t \geq 0$.
\end{assumption}

For the double-integrator dynamics, the following result, adapted from \cite[Theorem~2]{tang2021formation}, establishes exponential convergence of the follower errors.

\begin{proposition}
\label{prop:general_constant}
Consider system \eqref{eq:2nd_dynamics} with the control law
\begin{align}\label{eq:control_general_2nd}
    \bm{u}_i = \bm{u}_i^\star - k_{p_i} \sum_{j \in \mathcal{N}_i} \pi_{\bm{y}_{ij}} \bm{p}_{ij}^\star + k_{v_i} \sum_{j \in \mathcal{N}_i} \tilde{\bm{v}}_{ij},
\end{align}
where $k_{v_i}$ and $k_{p_i}$ are positive gains that satisfy $k_{v_i} > \frac{1}{m_i}$ and $k_{p_i} < \frac{4}{m_i} - \frac{4}{k_{vi}^2 m_i^3}$, $m_i = |\mathcal{N}_i|$.
If Assumptions \ref{assump:BPE}–\ref{assump:2} are satisfied, then the equilibrium $(\tilde{\bm{p}}_{ij},\tilde{\bm{v}}_{ij}) = (0,0)$ is uniformly exponentially stable, $\forall i \in \mathcal{V} \backslash \{1\}$ and $\forall j \in \mathcal{N}_i$. 
\end{proposition}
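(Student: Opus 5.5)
The plan is to reuse the structure of the proof of Theorem~\ref{thm:riccati_general}: induction along the leader-follower ordering of Definition~\ref{def:LF_formation}, with a cascade argument at each step. For follower $i$ the real work is the unforced system in which all upstream errors are set to zero. Substituting \eqref{eq:control_general_2nd} into \eqref{eq:2nd_err_dynamics} and using $\pi_{\bm y_{ij}}\bm p_{ij}=0$ (so $\pi_{\bm y_{ij}}\bm p_{ij}^\star=-\pi_{\bm y_{ij}}\tilde{\bm p}_{ij}$), I expect, with $\tilde{\bm p}_{il}=\tilde{\bm p}_{ij}$ and $\tilde{\bm v}_{il}=\tilde{\bm v}_{ij}$ when upstream errors vanish,
\begin{equation*}
\dot{\tilde{\bm p}}_{ij}=\tilde{\bm v}_{ij},\qquad
\dot{\tilde{\bm v}}_{ij}=-k_{p_i}Q_i(t)\,\tilde{\bm p}_{ij}-k_{v_i}m_i\,\tilde{\bm v}_{ij}.
\end{equation*}
This is a damped oscillator whose stiffness $Q_i(t)$ is only positive semidefinite pointwise. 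For $i=2$ with $\mathcal N_2=\{1\}$ the leader terms vanish, and this is the full system.

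\textbf{Step 1 (boundedness and a non-strict Lyapunov function).} Take the quadratic function
\begin{equation*}
V_i=\tfrac{a}{2}\|\tilde{\bm p}_{ij}\|^2+\tfrac12\|\tilde{\bm v}_{ij}\|^2+\varepsilon\,\tilde{\bm p}_{ij}^\top\tilde{\bm v}_{ij}.
\end{equation*}
Choose the cross weight $\varepsilon$ (tentatively $\varepsilon=1/(k_{v_i}m_i)$ or a similar multiple) and the weight $a$, related to $k_{v_i}m_i\varepsilon$, so that the indefinite term $\tilde{\bm p}^\top\tilde{\bm v}$ is absorbed. Completing squares, using $0\preceq Q_i\preceq m_iI_3$, should give
\begin{equation*}
\dot V_i\le -c_1\|\tilde{\bm v}_{ij}\|^2-c_2\,\tilde{\bm p}_{ij}^\top Q_i\tilde{\bm p}_{ij},\qquad c_1,c_2>0.
\end{equation*}
The gain inequalities $k_{v_i}>1/m_i$ and $k_{p_i}<4/m_i-4/(k_{v_i}^2m_i^3)$ are exactly what I expect to make $V_i$ positive definite and $c_1,c_2>0$ simultaneously; a discriminant condition of a $2\times2$ matrix produces an expression of this form. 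Matching the constants is the main technical obstacle, and I would first try the $\varepsilon$ above and tune $a$ until the determinant condition reproduces the stated bound. A consequence is that $\tilde{\bm p}_{ij},\tilde{\bm v}_{ij}$ are bounded, so $\bm p_{ij}$ is bounded.

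\textbf{Step 2 (uniform exponential stability).} As in the proof of Theorem~\ref{thm:riccati_general}, use identity \eqref{eq:y_ystar_identity} to replace $\tilde{\bm p}^\top\pi_{\bm y_{il}}\tilde{\bm p}$ by $\frac{\|\bm p^\star_{il}\|^2}{\|\bm p_{il}\|^2}\tilde{\bm p}^\top\pi_{\bm y^\star_{il}}\tilde{\bm p}$. Then $\dot V_i\le -c_1\|\tilde{\bm v}\|^2-c_2\gamma_i\,\tilde{\bm p}^\top Q_i^\star\tilde{\bm p}$, with $\gamma_i$ defined as in \eqref{eq:gamma2}. Here $Q_i^\star(t)$ is a known PE signal by Assumption~\ref{assump:BPE}, and it has bounded derivative because $\bm v^\star_{ij}$ is bounded (Assumption~\ref{assump:2}). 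The system has a standard structure: a negative definite term in the velocity, and a position term excited through $Q_i^\star$. Uniform exponential stability then follows from the PE-based results of \cite{loria2002uniform} (their Lemma~5, or a Matrosov/integral-of-observability argument using that $\tilde{\bm v}\to0$ forces $\tilde{\bm p}$ nearly constant over windows of length $\delta_i$, while the PE condition then forces $\int\tilde{\bm p}^\top Q_i^\star\tilde{\bm p}$ to be large).

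\textbf{Step 3 (cascade and induction).} For $i\ge3$, write the closed loop as the unforced system above plus perturbations that are linear in the upstream errors $(\tilde{\bm p}_{km},\tilde{\bm v}_{km})$, $k<i$, with bounded coefficients. This uses $\tilde{\bm p}_{il}=\tilde{\bm p}_{ij}+\tilde{\bm p}_{jl}$, the fact that $\tilde{\bm p}_{jl}$ is expressible through edges of the spanning tree, and the fact that $\bm u_j-\bm u_j^\star$ depends only on upstream errors. By the induction hypothesis these inputs decay uniformly exponentially. The standard cascade result for a UES linear time-varying system driven by an exponentially vanishing input (together with boundedness from Step~1, needed to keep $\gamma_i>0$) gives uniform exponential stability of $(\tilde{\bm p}_{ij},\tilde{\bm v}_{ij})$ for all $j\in\mathcal N_i$, which closes the induction.
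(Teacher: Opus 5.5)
The paper does not prove this proposition. It imports it from \cite[Theorem~2]{tang2021formation}, so there is no in-paper argument to compare against. Your reconstruction is sound, and the constant-matching you flag as the main obstacle closes exactly with your tentative choice. Set $K:=k_{v_i}m_i$, $\varepsilon=1/K$ and $a=K\varepsilon=1$. The $\tilde{\bm p}^\top\tilde{\bm v}$ terms then cancel, and $V_i$ is positive definite iff $\varepsilon<1$, i.e.\ $k_{v_i}>1/m_i$. With $W=\tilde{\bm p}^\top Q_i\tilde{\bm p}$ and $|\tilde{\bm v}^\top Q_i\tilde{\bm p}|\le\sqrt{m_i}\,\|\tilde{\bm v}\|\sqrt W$,
\[
\dot V_i\le-\Big(K-\tfrac1K\Big)\|\tilde{\bm v}\|^2+k_{p_i}\sqrt{m_i}\,\|\tilde{\bm v}\|\sqrt{W}-\tfrac{k_{p_i}}{K}\,W .
\]
This is negative definite in $(\|\tilde{\bm v}\|,\sqrt W)$ iff $k_{p_i}<\frac{4}{m_i}\big(1-\frac{1}{K^2}\big)=\frac{4}{m_i}-\frac{4}{k_{v_i}^2m_i^3}$, which is exactly the stated condition. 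Your $V_i$ is also the constant-gain specialization of the storage function the paper uses for its Riccati double-integrator theorem: take $\Sigma_i=I_3$ and $\beta=\varepsilon$, and note the paper's $\beta=1/(\lambda_2+k_{v_2})$ reduces to $1/(k_{v_2}m_2)$ when $\lambda_2=0$. Your Step~2 mirrors the paper's first-order argument via \eqref{eq:y_ystar_identity} and \cite[Lemma~5]{loria2002uniform}. The window/observability estimate you sketch does go through, because the closed loop is linear in $\tilde{\bm x}_{ij}$ with bounded coefficients.

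Three points should be tightened.

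First, the ``unforced'' system is not linear time-varying. $Q_i$ is built from the actual bearings, which depend on $\tilde{\bm p}_{ij}$ and, in the full loop, on the upstream errors. So the substitution $\tilde{\bm p}_{il}=\tilde{\bm p}_{ij}$ used in Step~2 is false there, and an LTV cascade lemma does not apply verbatim. It is cleaner to run Step~1 directly on the full closed loop with $W:=\sum_{l\in\mathcal N_i}\tilde{\bm p}_{il}^\top\pi_{\bm y_{il}}\tilde{\bm p}_{il}$, writing $\tilde{\bm p}_{ij}=\tilde{\bm p}_{il}-\tilde{\bm p}_{jl}$ in the $\varepsilon$-term. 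The same quadratic form in $(\|\tilde{\bm v}_{ij}\|,\sqrt W)$ appears, the identity then holds exactly for each $\tilde{\bm p}_{il}$, and the remainder is bounded by $C\|\tilde{\bm x}_{\mathrm{up}}\|(\|\tilde{\bm x}_{ij}\|+\|\tilde{\bm x}_{\mathrm{up}}\|)$. Since $\|\tilde{\bm x}_{\mathrm{up}}\|$ is integrable by the induction hypothesis, this gives boundedness of the forced system. That is what $\gamma_i>0$ actually requires; your Step~1 only gives boundedness for the unforced one. UES then follows from the same window argument.

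Second, define $\gamma_i$ pointwise rather than as an inf/sup ratio:
\[
\frac{\|\bm p_{il}^\star\|^2}{\|\bm p_{il}\|^2}\ge\Big(\frac{\underline p}{\underline p+\sup_t\|\tilde{\bm p}_{il}\|}\Big)^2 .
\]
This needs only $\underline p:=\inf_t\|\bm p^\star_{il}\|>0$, an extra standing assumption, as in the paper. It does not need boundedness of $\bm p^\star_{il}$, which Assumption~\ref{assump:2} does not provide.

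Third, because $\gamma_i$ depends on the bound on $\tilde{\bm p}$, the exponential estimate is uniform only over compact sets of initial errors. The paper's statement glosses over this in the same way.
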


As in the first-order case, Proposition~\ref{prop:general_constant} ensures exponential stability, but the corresponding convergence rates remain dependant on the desired formation's bearing-PE properties.
We next establish the convergence result for system \eqref{eq:2nd_dynamics} using the proposed gain. The resulting controller yields exponential convergence with rates set by the parameters $\lambda_i$.

\begin{theorem}
Consider system \eqref{eq:2nd_dynamics} with the control law for each agent $i\in \mathcal{V}$
\begin{align}
    \bm{u}_i 
    &= \bm{u}_i^\star -  M_i \sum_{j \in \mathcal{N}_i} \pi_{\bm{y}_{ij}}\bm{p}_{ij}^\star + k_{v_i} \sum_{j \in \mathcal{N}_i}\tilde{\bm{v}}_{ij}, \label{eq:2nd_control_law}
\end{align}
where $M_i(t) \in \mathbb{R}^{3\times 3}$ are solutions of \eqref{eq:Mi_riccati_general} for all $i \in \mathcal{V}\backslash\{1\}$ and $k_{v_i}$ are constant positive gains.
Suppose that Assumptions \ref{assump:BPE}–\ref{assump:2} are satisfied.
Then, $M_i$ is uniformly bounded and uniformly positive definite, and
the equilibrium $(\tilde{\bm{p}}_{ij},\tilde{\bm{v}}_{ij}) = (0,0)$ of the closed-loop error system is uniformly exponentially stable, $\forall i \in \mathcal{V} \backslash \{1\}$ and $\forall j \in \mathcal{N}_i$.
\end{theorem}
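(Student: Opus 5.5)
The proof follows the structure of Theorem~\ref{thm:riccati_general}: induction over the acyclic leader-follower ordering, with a cascade argument at each step. At step $i$, the errors of agents $k<i$ are already known to converge uniformly exponentially and act as vanishing inputs. First, the sign convention: since $\bm{v}_{ij}=\bm{v}_j-\bm{v}_i$, we have $\tilde{\bm v}_{ij}=\dot{\tilde{\bm p}}_{ij}$. Substituting \eqref{eq:2nd_control_law} into \eqref{eq:2nd_err_dynamics} and using $\pi_{\bm y_{il}}\bm p_{il}=0$, so that $-\pi_{\bm y_{il}}\bm p^\star_{il}=\pi_{\bm y_{il}}\tilde{\bm p}_{il}$, the unforced part for a given $j\in\mathcal N_i$ (all upstream errors set to zero, so $\tilde{\bm p}_{il}=\tilde{\bm p}_{ij}$ and $\tilde{\bm v}_{il}=\tilde{\bm v}_{ij}$) becomes
\begin{equation*}
\dot{\tilde{\bm p}}_{ij}=\tilde{\bm v}_{ij},\qquad \dot{\tilde{\bm v}}_{ij}=-M_iQ_i\tilde{\bm p}_{ij}-m_ik_{v_i}\tilde{\bm v}_{ij}.
\end{equation*}
Here I am reading the damping term with the sign that yields damping; the sign in \eqref{eq:2nd_control_law} should be matched against \eqref{eq:control_general_2nd}. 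The remaining terms, involving $\bm u_j-\bm u_j^\star$, $\tilde{\bm p}_{jl}$ and $\tilde{\bm v}_{jl}$, depend only on upstream agents. They are bounded and exponentially vanishing by the induction hypothesis, by boundedness of $M_i$, and by Assumption~\ref{assump:2}.

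For the unforced system I will use a cross-term Lyapunov function built on $\Sigma_i=M_i^{-1}$ from \eqref{eq:sigma_ode}. The key observation is that $M_iQ_i$ can be rewritten through $\Sigma_i$. I would take
\begin{equation*}
W_i=\tfrac12\tilde{\bm p}_{ij}^\top\Sigma_i\tilde{\bm p}_{ij}+\tfrac{\epsilon}{2}\|\tilde{\bm v}_{ij}+c\,\tilde{\bm p}_{ij}\|^2\ \text{or}\ \tfrac12\tilde{\bm v}_{ij}^\top\Sigma_i\tilde{\bm v}_{ij}+\epsilon\,\tilde{\bm p}_{ij}^\top\Sigma_i\tilde{\bm v}_{ij}+\dots
\end{equation*}
An alternative would be to change coordinates to $\bm z=\Sigma_i\tilde{\bm v}_{ij}+\alpha\tilde{\bm p}_{ij}$. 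The derivative of the $\Sigma_i$-weighted quadratic terms produces $-\lambda_i(\cdot)+\tilde{\bm p}^\top Q_i\tilde{\bm p}$-type terms via $\dot\Sigma_i=-\lambda_i\Sigma_i+Q_i$. The terms $\Sigma_iM_iQ_i=Q_i$ cancel or combine exactly as in the first-order proof. Choosing $\epsilon$ small relative to $k_{v_i}m_i$ should give
\begin{equation*}
\dot W_i\le -c_1\|\tilde{\bm v}_{ij}\|^2-c_2\,\tilde{\bm p}_{ij}^\top Q_i\tilde{\bm p}_{ij}.
\end{equation*}
Then, as in the case $i=2$ of Theorem~\ref{thm:riccati_general}, the identity \eqref{eq:y_ystar_identity} and the ratio $\gamma_i$ convert $Q_i$ to $\gamma_iQ_i^\star$. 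This requires first showing that $\tilde{\bm p}_{ij}$ is bounded, which follows from $\dot W_i\le0$ together with $\Sigma_i\succeq e^{-\lambda_i t}\Sigma_i(0)$.

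The conclusion then combines three ingredients. Assumption~\ref{assump:BPE} makes $Q_i^\star$ PE. A Loría–Panteley type result \cite[Lemma~5]{loria2002uniform}, suitably adapted to a PE-weighted negative term plus a strictly negative $\|\tilde{\bm v}\|^2$ term, gives uniform exponential convergence of $(\tilde{\bm p}_{ij},\tilde{\bm v}_{ij})$. Lemma~\ref{lem:piy_PE} then shows that $Q_i$ itself is PE after some $t_0$, so Lemma~\ref{lem:riccati_general} yields the bounds \eqref{eq:M_bounds} on $M_i$ for $t\ge t_0$. This makes $W_i$ uniformly positive definite and lets the $\lambda_i$-term produce a rate governed by $\lambda_i$ together with $k_{v_i}$. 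Finally, the standard cascade argument, an exponentially stable unforced system driven by exponentially vanishing bounded inputs, propagates the result through the induction over $i$.

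The main obstacle will be the cross term. Unlike the first-order case, $\tfrac{d}{dt}(\tilde{\bm p}^\top\Sigma_i\tilde{\bm v})$ contains $\tilde{\bm p}^\top\dot\Sigma_i\tilde{\bm v}$ and $\tilde{\bm v}^\top\Sigma_i\tilde{\bm v}$. Controlling these needs an a priori bound on $\Sigma_i$, but that bound is only available after PE of $Q_i$ is established. I would first use the upper bound $\Sigma_i\preceq\max(\|\Sigma_i(0)\|,m_i/\lambda_i)I_3$, which holds unconditionally by the proof of Lemma~\ref{lem:riccati_general}, to dominate these terms by choosing $\epsilon$ small. The lower bound would then be used only in the final exponential-rate step.
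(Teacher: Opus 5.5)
\textbf{There is a genuine gap.} Your central inequality $\dot W_i\le -c_1\|\tilde{\bm v}_{ij}\|^2-c_2\tilde{\bm p}_{ij}^\top Q_i\tilde{\bm p}_{ij}$ does not follow from taking $\epsilon$ small while using only the upper bound on $\Sigma_i$. Take $W=\tfrac a2\tilde{\bm p}^\top\Sigma_i\tilde{\bm p}+\epsilon\tilde{\bm p}^\top\Sigma_i\tilde{\bm v}+\tfrac12\tilde{\bm v}^\top\Sigma_i\tilde{\bm v}$ and set $\kappa_i:=m_ik_{v_i}$. Using $\Sigma_iM_i=I_3$ and \eqref{eq:sigma_ode}, one gets
\begin{align*}
\dot W ={}& -\tfrac{a\lambda_i}{2}\tilde{\bm p}^\top\Sigma_i\tilde{\bm p}+\bigl(a-\epsilon(\lambda_i+\kappa_i)\bigr)\tilde{\bm p}^\top\Sigma_i\tilde{\bm v}\\
&-\bigl(\kappa_i+\tfrac{\lambda_i}{2}-\epsilon\bigr)\tilde{\bm v}^\top\Sigma_i\tilde{\bm v}+\bigl(\tfrac a2-\epsilon\bigr)\tilde{\bm p}^\top Q_i\tilde{\bm p}\\
&-(1-\epsilon)\tilde{\bm p}^\top Q_i\tilde{\bm v}+\tfrac12\tilde{\bm v}^\top Q_i\tilde{\bm v}.
\end{align*}
Three features of this expression block your step.
\begin{itemize}
\item The $Q_i$-weighted part is indefinite for every choice of weights. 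The term $+\tfrac12\tilde{\bm v}^\top Q_i\tilde{\bm v}$ comes from $\dot\Sigma_i$ in the kinetic term, and it appears for any $\Sigma_i$-weighted quadratic form.
\item The cross term $-(1-\epsilon)\tilde{\bm p}^\top Q_i\tilde{\bm v}$ has coefficient close to $-1$, not small.
\item Every negative term in $\tilde{\bm v}$ is $\Sigma_i$-weighted.
\end{itemize}
Absorbing the bad terms therefore needs $(\kappa_i+\lambda_i/2)\Sigma_i\succeq c\,Q_i$ uniformly. Getting a uniform $-c_1\|\tilde{\bm v}\|^2$ needs $\Sigma_i\succeq\underline{\sigma}I_3$. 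Both are lower bounds on $\Sigma_i$, and the upper bound cannot supply them. Small $\epsilon$ actually makes things worse: the only negative $Q_i$-term on the $\tilde{\bm p}$ side is $(\epsilon-\tfrac a2)\tilde{\bm p}^\top Q_i\tilde{\bm p}$, so Young's inequality on the cross term costs an $O(1/\epsilon)$ multiple of $\tilde{\bm v}^\top Q_i\tilde{\bm v}$. Your first candidate fares no better. The unweighted $\|\tilde{\bm v}+c\tilde{\bm p}\|^2$ produces $-\epsilon(\tilde{\bm v}+c\tilde{\bm p})^\top M_iQ_i\tilde{\bm p}$. This contains the a priori unbounded $M_i$ and is not sign-definite.

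The same missing lower bound breaks the rest of your chain, which makes the argument circular.
\begin{itemize}
\item \emph{Boundedness of $\tilde{\bm p}_{ij}$.} $\dot W\le0$ together with $\Sigma_i\succeq e^{-\lambda_i t}\Sigma_i(0)$ only gives $\|\tilde{\bm p}_{ij}(t)\|^2\le Ce^{\lambda_i t}$. So $\gamma_i>0$ is not justified. In the first-order proof, boundedness came from the exact identity $\frac{d}{dt}(\Sigma_2\tilde{\bm p}_{21})=-\lambda_2\Sigma_2\tilde{\bm p}_{21}$. There is no analogue here, since $\frac{d}{dt}(\Sigma_i\tilde{\bm v})$ picks up $Q_i(\tilde{\bm v}-\tilde{\bm p})$.
\item \emph{The Lor\'ia--Panteley step.} \cite[Lemma~5]{loria2002uniform} needs $W$ sandwiched between quadratic bounds, which again requires a lower bound on $\Sigma_i$.
\item \emph{Circularity.} You only obtain that lower bound from Lemma~\ref{lem:piy_PE}, and that lemma presupposes the convergence you are trying to prove.
\end{itemize}

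\textbf{How the paper avoids this.} It proves PE of the actual bearing $\pi_{\bm y_{21}}$ first, independently of convergence, by contradiction. Lemma~\ref{lem:nonPE_condition} gives $\bm y_{21}\to\bar{\bm y}_{21}$, hence $\Sigma_2\to\frac{1}{\lambda_2}\pi_{\bar{\bm y}_{21}}$. Lemma~\ref{lem:projected_inverse} then gives $\pi M_2\pi\to\lambda_2\pi$. A projected Lyapunov function forces $\pi_{\bm y_{21}}\tilde{\bm p}_{21}$ and $\pi_{\bm y_{21}}\tilde{\bm v}_{21}$ to vanish. This makes $\bm p^\star_{21}$ stay aligned with $\bar{\bm y}_{21}$, contradicting Assumption~\ref{assump:BPE}. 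Only after the two-sided bounds of Lemma~\ref{lem:riccati_general} are in hand does it run the Lyapunov analysis with $P_2$. There it chooses $\beta=1/(\lambda_2+\kappa_2)$ to cancel the $\Sigma_2$ cross term. It also imposes an explicit condition bounding $\overline{m}_2$ in terms of $\lambda_2$ and $k_{v_2}$, so that $S_\Sigma$ dominates the indefinite $S_\pi$. That gain/PE-strength condition is precisely the $\Sigma_i\succeq c\,Q_i$ requirement above, and choosing $\epsilon$ small cannot replace it.
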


\begin{proof}
Define the state $\tilde{\bm{x}}_{ij} := (\tilde{\bm{p}}_{ij},
        \tilde{\bm{v}}_{ij})
    \in \mathbb{R}^6$.
    Consider the storage function $V_i(\tilde{\bm{x}}_{ij}) = \tilde{\bm{x}}_{ij}^\top P_i \tilde{\bm{x}}_{ij}$,
where
\begin{align}
    P_i &:= 
    \begin{bmatrix}
        \Sigma_i & \beta \Sigma_i \\
        \beta \Sigma_i & \Sigma_i
    \end{bmatrix}, \;\; 0<\beta<1, \;\; i \in \mathcal{V}\backslash\{1\}.
\end{align}
where $\Sigma_i$ is a bounded semi-definite positive solution of \eqref{eq:sigma_ode}.
We proceed by mathematical induction. 

\noindent
($i=2$): Agent 2 has $\mathcal{N}_2 = \{1\}$ and $(\tilde{\bm{p}}_1, \tilde{\bm{v}}_1)=(0,0)$.
From \eqref{eq:2nd_control_law}, the error dynamics \eqref{eq:2nd_err_dynamics} satisfy
\begin{equation}
\begin{cases}
    \dot{\tilde{\bm{p}}}_{21} &= \tilde{\bm{v}}_{21}, \\
    \dot{\tilde{\bm{v}}}_{21} &= - M_2  \pi_{\bm{y}_{21}} \tilde{\bm{p}}_{21} - k_{v_2} \tilde{\bm{v}}_{21}. 
\end{cases}
    \label{eq:2nd_err_dynamics_2} 
\end{equation}
Then, the closed-loop error dynamics \eqref{eq:2nd_err_dynamics} are
\begin{align} \label{eq:xtilde_dynamics}
    \dot{\tilde{\bm{x}}}_{21} = -A_2(t)\tilde{\bm{x}}_{21}, \;  A_2(t) :=
    \begin{bmatrix}
        0 & -I_3 \\
        M_2(t)Q_2(t) & k_{v_2} I_3
    \end{bmatrix}.
\end{align}
with $Q_2(t) = \pi_{\bm{y}_{21}(t)}$.

Differentiating $V_2$ along the trajectories of \eqref{eq:xtilde_dynamics}, we obtain
\begin{align}
    \dot{V}_2 
    &= \tilde{\bm{x}}_{21}^\top P_2 \dot{\tilde{\bm{x}}}_{21} 
     + \dot{\tilde{\bm{x}}}_{21}^\top P_2 \tilde{\bm{x}}_{21} + \tilde{\bm{x}}_{21}^\top \dot{P}_2 \tilde{\bm{x}}_{21}  \notag\\
    &= \tilde{\bm{x}}_{21}^\top \big(P_2 A_2(t) + A_2^\top(t) P_2 + \dot{P}_2 \big)\tilde{\bm{x}}_{21}. \label{eq:2nd_lyap_derivative}
\end{align}
Denote $S_2(t) := - \big(P_2 A_2(t) + A_2^\top(t) P_2 + \dot{P}_2\big)$.
Then \eqref{eq:2nd_lyap_derivative} becomes
$\dot{V}_2 = - \tilde{\bm{x}}_{21}^\top S_2(t)\tilde{\bm{x}}_{21}$, with $S_2(t) = S_\pi + S_\Sigma$,
where 
\begin{align*}
    S_\pi &= \begin{bmatrix}
       (2 \beta - 1)\pi_{\bm{y}_{21}} & (1 - \beta) \pi_{\bm{y}_{21}}  \\
        (1 - \beta) \pi_{\bm{y}_{21}} &  - \pi_{\bm{y}_{21}} 
    \end{bmatrix},  \\
    S_\Sigma &= \begin{bmatrix}
       \lambda_2 \Sigma_2  & (\beta (\lambda_2 + k_{v_2}) - 1) \Sigma_2 \\
        (\beta (\lambda_2 + k_{v_2}) - 1) \Sigma_2 & (\lambda_2 + 2 k_{v_2} - 2\beta ) \Sigma_2
    \end{bmatrix} .
\end{align*}
Note that, by choosing $\beta = \frac{1}{\lambda_2 + k_{v_2}}$, we get
\begin{align*}
    S_\Sigma &= \begin{bmatrix}
       \lambda_2 \Sigma_2  & 0\\
        0 & (\lambda_2 + 2 k_{v_2} - \frac{2}{\lambda_2 + k_{v_2}} ) \Sigma_2  
    \end{bmatrix} .
\end{align*}

To show that $M_2$ is uniformly bounded and uniformly positive definite, we prove that $\pi_{\bm{y}_{21}}$ is PE provided that $\pi_{\bm{y}_{21}^\star}$ is PE. Suppose, by contradiction, that $\pi_{\bm{y}_{21}}$ is not PE. 
Then, using Lemma \ref{lem:nonPE_condition},
one can consider $\bm{y}_{21}(t) \to \bar{\bm{y}}_{21}$ constant.

From \eqref{eq:sigma_ode}, the lack of PE of $\pi_{\bm{y}_{21}}$ implies that the Riccati solution satisfies $\Sigma_2 \to \frac{1}{\lambda_2}\pi_{\bar{\bm{y}}_{21}}$. 
Thus $M_2 = \Sigma_2^{-1}$ becomes unbounded outside $\mathrm{Im}(\pi_{\bar{\bm{y}}_{21}})$. 
By Lemma~\ref{lem:projected_inverse} applied to $\Sigma_2$, the projected pseudo-inverse satisfies
$ \pi_{\bar{\bm{y}}_{21}} M_2 \pi_{\bar{\bm{y}}_{21}} = \lambda_2 \pi_{\bar{\bm{y}}_{21}}$ on $\mathrm{Im}(\pi_{\bar{\bm{y}}_{21}}) $, and given that $\pi_{\bm{y}_{21}} \to \pi_{\bar{\bm{y}}_{21}}$, we have $\pi_{\bm{y}_{21}} M_2 \pi_{\bm{y}_{21}} \rightarrow \lambda_2 \pi_{\bm{y}_{21}}$ on $\mathrm{Im}(\pi_{\bm{y}_{21}})$.

Define the projected state $\bar{\bm{x}}_{21} = \Pi_{2}\tilde{\bm{x}}_{21}$ and $\bar{P}_2 = \Pi_{2} P_2 \Pi_{2}$, where $\Pi_{2} = \mathrm{blkdiag}(\pi_{\bm{y}_{21}}, \pi_{\bm{y}_{21}})$, and consider the storage function $\bar{V}_2 = \bar{\bm{x}}_{21}^\top \bar{P}_2 \bar{\bm{x}}_{21} $.  
Using the fact that $\Sigma_2 \to \frac{1}{\lambda_2}\pi_{\bm{y}_{21}}$ and $\Pi_{2}\bar{\bm{x}}_{21} = \bar{\bm{x}}_{21}$, one verifies by direct computation that
\begin{align*}
    \dot{\bar{V}}_2 
    &= - \bar{\bm{x}}_{21}^\top 
    \begin{bmatrix}
        2\beta I_3 & (1-\beta) I_3 \\ 
        (1-\beta) I_3 & \frac{2}{\lambda_2}(k_{v_2} - \beta) I_3
    \end{bmatrix}
    \bar{\bm{x}}_{21}.
\end{align*}
Hence, if the gains satisfy $4\beta(k_{v_2} - \beta) > \lambda_2 (1-\beta)^2$, the above matrix is positive definite on $\mathrm{Im}(\Pi_2)$, which implies $\Pi_2 \tilde{\bm{x}}_{21} \to 0$.
That is, $\pi_{\bm{y}_{21}} \tilde{\bm{p}}_{21} = \pi_{\bm{y}_{21}} \tilde{\bm{v}}_{21} = 0$.

From \eqref{eq:2nd_control_law}, we have 
$\ddot{\bm{p}}_{21} = \bm{u}_{1}^\star - \bm{u}_{2}^\star + M_2 \pi_{\bm{y}_{21}}\bm{p}_{21}^\star - k_{v_2}\tilde{\bm{v}}_{21}$.
Multiplying on the left by $\pi_{\bm{y}_{21}}$ and exploiting the fact that $\bm{u}_1^\star - \bm{u}_2^\star = \ddot{\bm{p}}_{21}^\star$, and that $\bm{y}_{21}$ lies in the same direction as $\bm{p}_{21}$ and $\bm{v}_{21}$ and consequently as $\ddot{\bm{p}}_{21}$, one obtains 
$$
    \pi_{\bm{y}_{21}}( \ddot{\bm{p}}_{21}^\star + k_{v_2}\dot{\bm{p}}_{21}^\star + M_2 \pi_{\bm{y}_{21}}\bm{p}_{21}^\star) = 0.
$$
or equivalently:
$$
  \pi_{\bm{y}_{21}}( \ddot{\bm{p}}_{21}^\star + k_{v_2}\dot{\bm{p}}_{21}^\star + \lambda_2 \bm{p}_{21}^\star) = 0.
$$
From there, one concludes that the desired trajectory of $\bm{p}_{21}^\star$, solution of the above differential equation,  is in the direction of $\bm{p}_{21}$. Hence $\bm{y}_{21}^\star$ is not PE, which contradicts the assumption, which in turn implies that $\bm{y}_{21}$ is PE. 

Therefore, by the PE property of $\pi_{\bm{y}_{21}}$ and Lemma \ref{lem:riccati_general}, $M_2$ is well-defined and uniformly positive definite, and $\Sigma_2 = M_2^{-1}$ satisfies uniform bounds $\frac{1}{\overline{m}_2} I_3 \preceq \Sigma_2 \preceq \frac{1}{\underline{m}_2} I_3$. Consequently, $V_2$ is positive definite and radially unbounded in $\tilde{\bm{x}}_{21}$.

Now, considering the expression \eqref{eq:2nd_lyap_derivative}, we have 
\begin{align*}
\tilde{\bm{x}}_{21}^\top S_\pi \tilde{\bm{x}}_{21} &= (2\beta - 1)\tilde{\bm{p}}_{21}^\top \pi_{\bm{y}_{21}} \tilde{\bm{p}}_{21}
+ 2(1-\beta)\tilde{\bm{p}}_{21}^\top \pi_{\bm{y}_{21}} \tilde{\bm{v}}_{21} \\ & \hspace{4.5cm}
- \tilde{\bm{v}}_{21}^\top \pi_{\bm{y}_{21}} \tilde{\bm{v}}_{21}.
\end{align*}
Using Young's inequality,
\begin{align*} 
2(1-\beta)\tilde{\bm{p}}_{21}^\top \pi_{\bm{y}_{21}} \tilde{\bm{v}}_{21}
\leq (1-\beta)\left( \tilde{\bm{p}}_{21}^\top \pi_{\bm{y}_{21}} \tilde{\bm{p}}_{21}
+ \|\tilde{\bm{v}}_{21}\|^2 \right). 
\end{align*}
Moreover, since $\pi_{\bm{y}_{21}} \preceq I_3$,
$\tilde{\bm{v}}_{21}^\top \pi_{\bm{y}_{21}} \tilde{\bm{v}}_{21} \leq \|\tilde{\bm{v}}_{21}\|^2$.
Hence,
\begin{align} \label{eq:Qpi_final}
\tilde{\bm{x}}_{21}^\top S_\pi \tilde{\bm{x}}_{21} 
&\geq c_p \tilde{\bm{p}}_{21}^\top \pi_{\bm{y}_{21}}\tilde{\bm{p}}_{21} \tilde{\bm{p}}_{21} - c_v \|\tilde{\bm{v}}_{21}\|^2,
\end{align}
where $c_p = 3 \beta - 2$ and $c_v = 2 - \beta$. 

On the other hand, since $\Sigma_2 \geq \frac{1}{\overline{m}_2} I_3$, and provided that $\lambda_2 + 2k_{v_2} > \frac{2}{\lambda_2 + k_{v_2}}$, we have 
\begin{align}
\tilde{\bm{x}}_{21}^\top S_\Sigma \tilde{\bm{x}}_{21}  &\geq \tfrac{\lambda_2}{\overline{m}_2} \|\tilde{\bm{p}}_{21}(\tau)\|^2 + \bar{c}_{v} \|\tilde{\bm{v}}_{21}(\tau)\|^2,
\label{eq:QM_final} 
\end{align}
where $\bar{c}_{v} = \frac{1}{\overline{m}_2}(\lambda_2 + 2 k_{v_2} - \frac{2}{\lambda_2 + k_{v_2}} )$.
Combining \eqref{eq:Qpi_final} and \eqref{eq:QM_final}, we obtain
\begin{align*}
\dot{V}(t) 
&\leq - \tilde{\bm{p}}_{21}^\top (c_p \pi_{\tilde{\bm{p}}_{21}} + \tfrac{\lambda_2}{\overline{m}_2} I_3) \tilde{\bm{p}}_{21}
- (\bar{c}_v - c_v) \|\tilde{\bm{v}}_{21}\|^2, \\
&\leq - \tfrac{\lambda_2}{\overline{m}_2} \| \tilde{\bm{p}}_{21} \|^2
- (\bar{c}_v - c_v) \|\tilde{\bm{v}}_{21}\|^2.
\end{align*}
Thus, under the condition
$\overline{m}_2 < \tfrac{(\lambda_2 + 2 k_{v_2})(\lambda_2 + k_{v_2}) - 2}{2(\lambda_2 + k_{v_2}) - 1}$,
\begin{align} \label{eq:V2_inequality}
    \dot{V}_2(t) \leq - \alpha \|\tilde{\bm{x}}_{21}(t)\|^2, \;\, \alpha = \min\left(\tfrac{\lambda_2}{\overline{m}_2}, \bar{c}_v - c_v\right) > 0.
\end{align}
Since $P_2(t)$ is uniformly bounded and positive definite, 
and using the bounds on $\Sigma_2$, define $\overline{\kappa}_2 := (1+\beta)/\underline{m}_2$, $\underline{\kappa}_2 := (1-\beta)/\overline{m}_2$. This yields the uniform bounds $\underline{\kappa} I_6 \preceq P_2(t) \preceq \overline{\kappa} I_6$.
Consequently,
$\underline{\kappa} \|\tilde{\bm{x}}_{21}(t)\|^2 \leq V_2(t) \leq \overline{\kappa}_2 \|\tilde{\bm{x}}_{21}(t)\|^2$.
Combining this with \eqref{eq:V2_inequality}, we obtain
$
\|\tilde{\bm{x}}_{21}(t)\|
\leq \sqrt{\tfrac{\overline{\kappa}_2}{\underline{\kappa}_2}} e^{-\alpha t/2\overline{\kappa}_2} \|\tilde{\bm{x}}_{21}(0)\|
$.
Hence, the equilibrium $\tilde{\bm{x}}_{21} = 0$ is uniformly exponentially stable.


\noindent
($i=3$): Consider agent $3$. Its neighbor set satisfies $\mathcal{N}_3 \subseteq \{1,2\}$. 
If $\mathcal{N}_3 = \{1\}$, the result follows directly from the case $i=2$. 
Otherwise, if $\mathcal{N}_3 = \{1,2\}$ or $\mathcal{N}_3 = \{1,2\}$,
one has $\tilde{\bm{p}}_{31} = \tilde{\bm{p}}_{32} + \tilde{\bm{p}}_{21}$.
Using \eqref{eq:2nd_err_dynamics} and \eqref{eq:2nd_control_law}, the closed-loop system for $\tilde{\bm{x}}_{3j}$, $j \in \mathcal{N}_3$, can be written as
\begin{align}
\dot{\tilde{\bm{x}}}_{3j} = -A_3(t)\tilde{\bm{x}}_{3j} + B_{21}(t)\tilde{\bm{x}}_{21},
\end{align}
where $A_3(t)$ is defined as in \eqref{eq:xtilde_dynamics} and $B_{21}(t)$ is a bounded matrix-valued function.
This system can be interpreted as a cascaded system with input $\tilde{\bm{x}}_{21}$ to the unforced system
\begin{align} \label{eq:2dyn_3_unforced}
\dot{\tilde{\bm{x}}}_{3j} = -A_3(t)\tilde{\bm{x}}_{3j}.
\end{align}
The proof becomes analogous to the proof of $i=2$ and the equilibrium $\tilde{\bm{x}}_{3j} = 0$ of the unforced system \eqref{eq:2dyn_3_unforced} is uniformly exponentially stable. Moreover, since $B_{21}(t)$ is bounded, it follows that $\tilde{\bm{x}}_{3j}=0$ is uniformly exponentially stable.

\noindent
($i \geq 4$): Assume that for all agents $k \in \{2, \dots, i-1\}$ and all $m \in \mathcal{N}_k$, the equilibrium $\tilde{\bm{x}}_{km} = 0$ is uniformly exponentially stable. 
For agent $i \in \mathcal{V}\backslash\{1\}$ and any $j \in \mathcal{N}_i$, the relative errors can be expressed as
$ \tilde{\bm{x}}_{ij} = \tilde{\bm{x}}_{ik} + \tilde{\bm{x}}_{kj}$, $k \in \mathcal{N}_i$.
Iterating along the graph, $\tilde{\bm{x}}_{ij}$ can be written in terms of error variables $\tilde{\bm{x}}_{km}$ with $2 \le k \le i-1$, $m \in \mathcal{N}_k$.
Thus, the closed-loop dynamics can be expressed as
\begin{align}
\dot{\tilde{\bm{x}}}_{ij} 
= -A_i(t)\tilde{\bm{x}}_{ij} 
+ \sum_{2 \le k \le i-1,\, m \in \mathcal{N}_k} B_{km}(t)\tilde{\bm{x}}_{km},
\end{align}
where $A_i(t)$ is defined as in \eqref{eq:xtilde_dynamics} and $B_{km}(t)$ are bounded matrix-valued functions.
This is again a cascaded system, with inputs $\tilde{\bm{x}}_{km}$. By the induction hypothesis, $\tilde{\bm{x}}_{km} =0 $ is uniformly exponentially stable.
Since the unforced system $\dot{\tilde{\bm{x}}}_{ij} = -A_i(t)\tilde{\bm{x}}_{ij}$
is uniformly exponentially stable (by the same arguments as in the case $i=2$), and the perturbation terms are bounded and vanishing, it follows that $\tilde{\bm{x}}_{ij}=0$ is uniformly exponentially stable.
\end{proof}


    \section{Simulation results}\label{sec:simresults}

This section provides simulation results to illustrate the
effectiveness of the proposed control laws \eqref{eq:control_riccati_general} and \eqref{eq:2nd_control_law}.

\subsection{Formation control for single-integrator motion dynamics}
We compare the performance of the Riccati-based controller~\eqref{eq:control_riccati_general} against the constant-gain law~\eqref{eq:control_general} using a four-agent system in $\mathbb{R}^3$ ($\mathcal{V} = \{1,2,3,4\}$) under the chain leader-follower graph $\mathcal{N}_i = \{i-1\}$ ($i \geq 2$). 
In this scenario, the leader is static at the origin $\bm{p}_1 = [0\; 0\; 0]^\top$. The desired relative positions  rotate about the $z$-axis as $\bm{p}_{i1}^\star(t) = R^\top(t)\bm{p}_{i1}^\star(0)$, where $R(t) \in \SO(3)$ is generated by the angular velocity  $\omega(t) = [0\; 0\; 0.2]^\top$.
The initial configuration is given by $\bm{p}_{2}^\star(0) = [0\;1\;0]^\top$, 
$\bm{p}_{3}^\star(0) = [\frac{\sqrt{3}}{2}\;\frac{1}{2}\;0]^\top$, 
$\bm{p}_{4}^\star(0) = [\frac{1}{2}\;\frac{\sqrt{3}}{2}\;1]^\top$, 
which defines a rotating pyramid-shaped formation that ensures the desired bearings $\bm{y}_{ij}^\star(t)$ are well-excited. 
The initial conditions are $\bm{p}_{3}(0) = [-\frac{1}{2}\;  1\;  \frac{1}{2}]^\top$, $\bm{p}_{2}(0) = [-1\; -\frac{1}{2}\; -\frac{1}{2}]^\top$, $\bm{p}_{4}(0) = [-\frac{1}{4}\; -\frac{1}{4}\;  \frac{1}{2}]^\top$.
The gains are selected as follows: $\forall i \in \mathcal{V}\backslash\{1\}$, $k_i = 1$ for controller \eqref{eq:control_general}, and $M_i(0) = 0.1 I_3$ and $\lambda_i = 0.1$ for controller \eqref{eq:control_general}.
\begin{figure}[h]
    \centering
    \includegraphics[width=.97\linewidth]{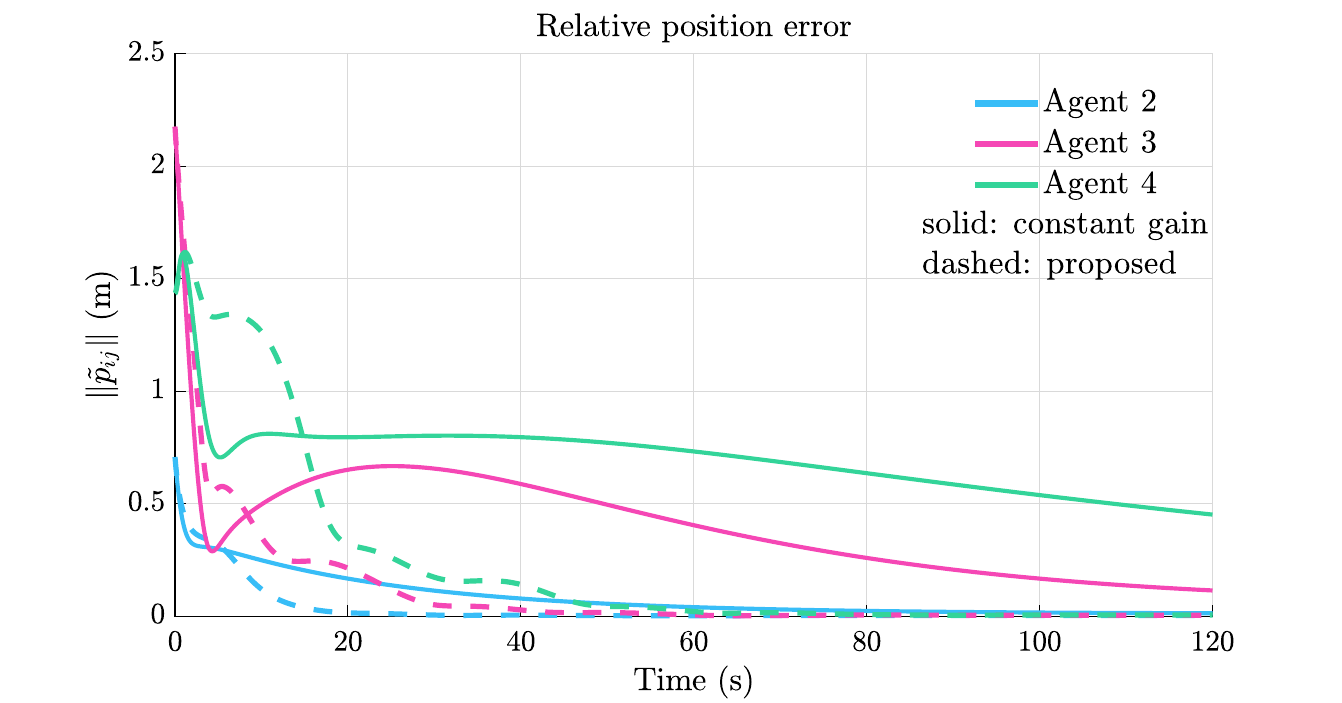}
    \caption{Relative position error norms $\|\tilde{\bm{p}}_{ij}(t)\|$ for the rotating pyramid formationn.}
    \label{fig:sim1}
\end{figure}

Fig. \ref{fig:sim1} shows the agents' relative position error norms $\|\tilde{\bm{p}}_{ij}(t)\|$, the corresponding simulation results with animations of the 3D formation trajectories are available at {\tt\small https://tinyurl.com/2zjzm8ms}.
The errors converge for both controllers, although the fixed gain controller~\eqref{eq:control_general} exhibits significantly slower convergence, particularly for $\tilde{\bm{p}}_{43}$, which is most affected by the chain topology. In contrast, the proposed controller achieves faster and more uniform convergence, consistent with its convergence rate being independent of the PE bound.

\subsection{Formation control for double-integrator motion dynamics}

To compare the control law \eqref{eq:2nd_control_law} with \eqref{eq:control_general_2nd}, we consider a quasi-static stabilization scenario of a group of underwater vehicles, where five followers are required to converge to a desired star formation around a static leader located at $\bm{p}_1^\star = [0\; 0\; 0]^\top$ ($\mathcal{V} = \{1,\dots,6\}$, $\mathcal{N}_i = \{1\}$ for all $i \geq 2$). The desired relative positions evolve along small arcs with angular amplitude $a = \tfrac{\pi}{36}$ and frequency $\omega = 0.5$\,rad/s:
\begin{equation}\label{eq:quasi_static_desired}
    \bm{p}_{i1}^\star(t) 
    = r_i \begin{bmatrix}
        \cos\bigl(\theta_i + a \sin(\omega t)\bigr) \\
        \sin\bigl(\theta_i + a \sin(\omega t)\bigr) \\
        0
    \end{bmatrix}, \;\, \bm{v}_{i1}^\star(t) = \dot{\bm{p}}_{i1}^\star(t),
\end{equation}
with $r_i = 2$ m, and $\theta_i = \frac{(i-2)2\pi}{5}$, $i \in \mathcal{V}\backslash\{1\}$. 
This configuration places the followers at the vertices of a regular pentagon in the $xy$-plane. The angular motion induces small variations in the desired bearings relative to the leader that result in a bearing-PE formation with a small lower bound in the PE condition. 
The gains are as follows: $k_{p_i} = 0.8$ and $k_{v_i} = 1.5$ for controller \eqref{eq:control_general_2nd}, and $M_i(0) = 0.1 I_3$, $\lambda_i = 0.25$, and $k_{v_i} = 1.5$ for all $i \in \mathcal{V}\backslash\{1\}$ for controller \eqref{eq:2nd_control_law}.
\begin{figure}[h]
    \centering
    \includegraphics[width=.95\linewidth]{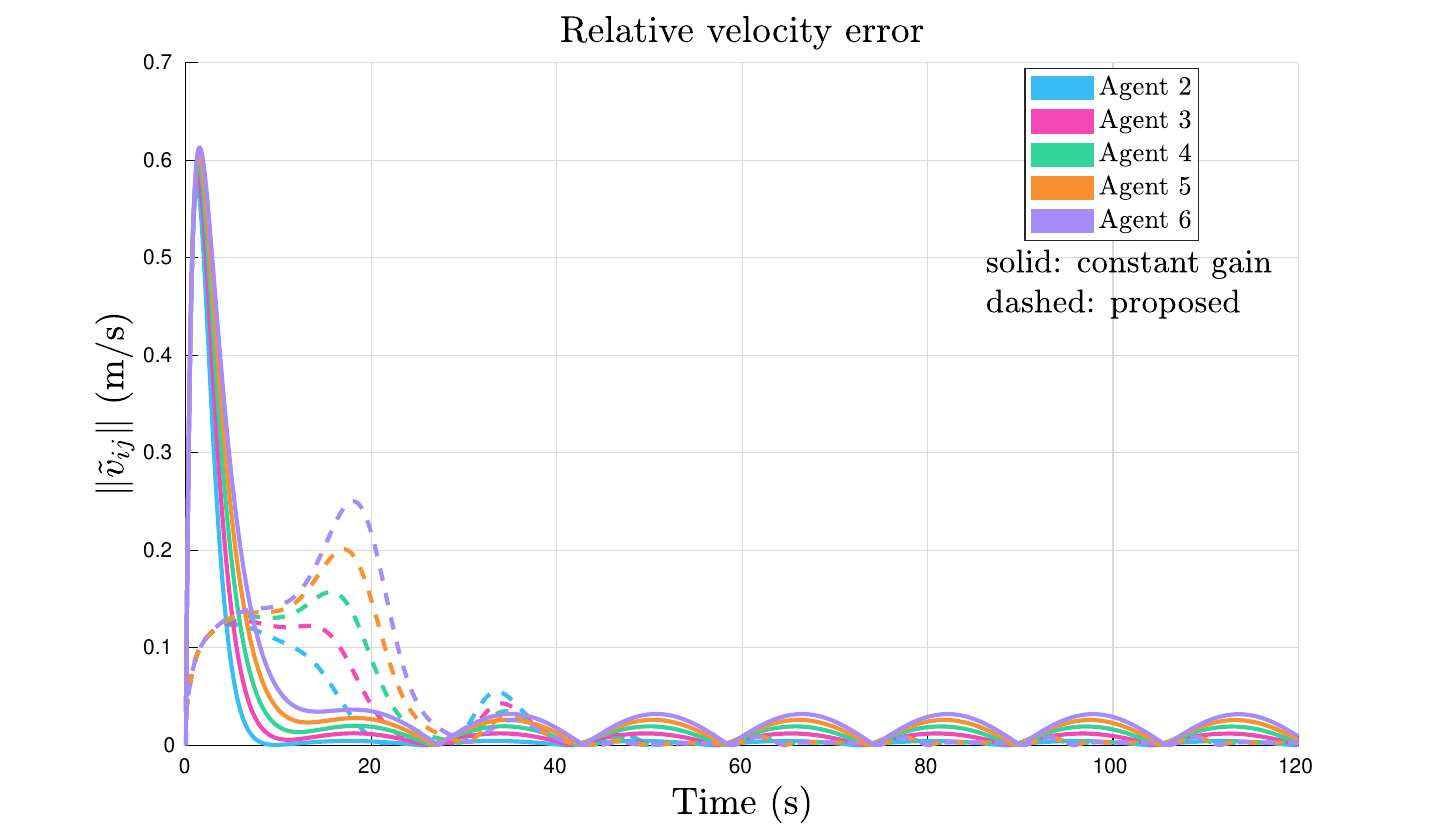}
    \includegraphics[width=.95\linewidth]{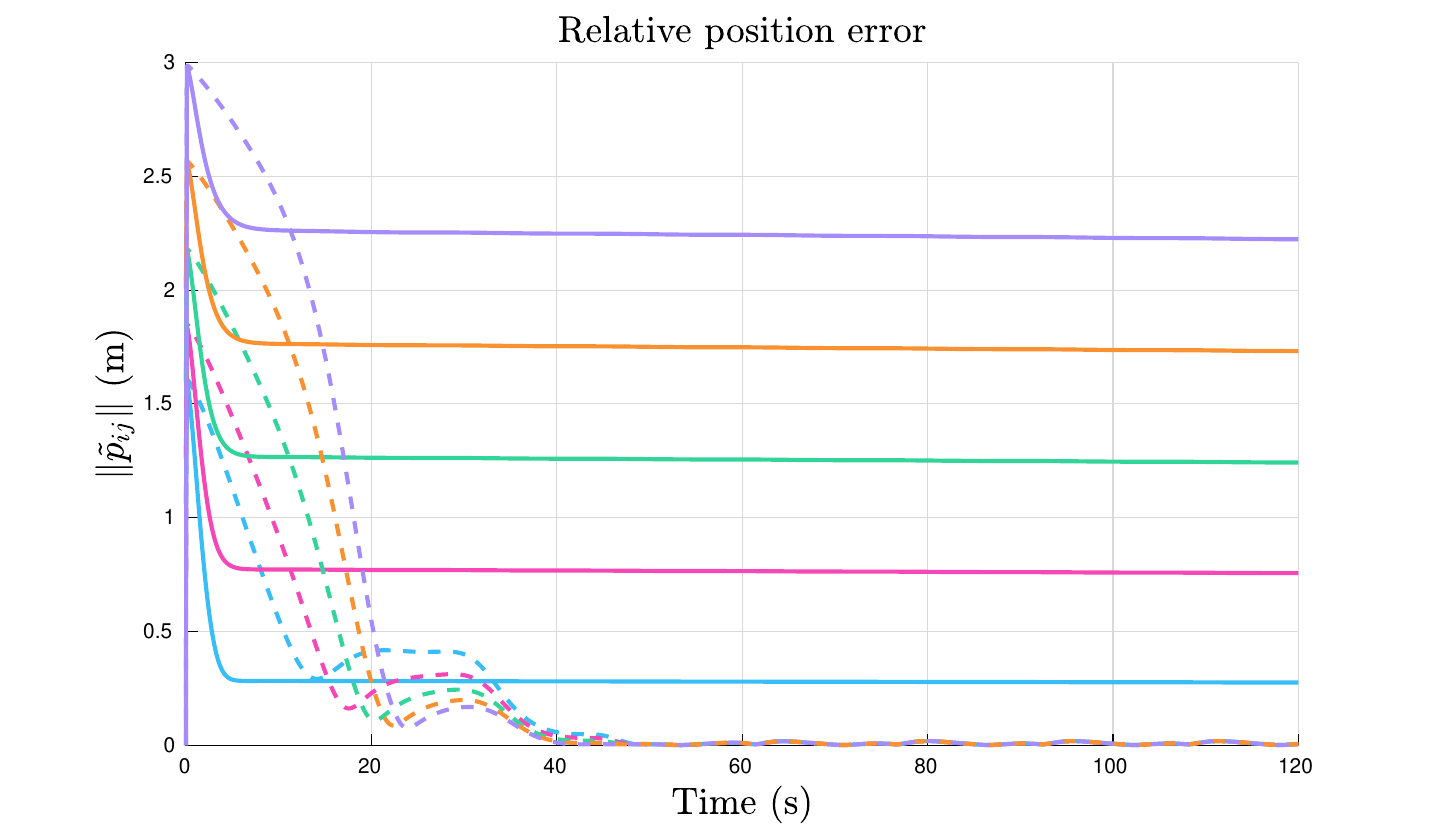}
    \caption{Relative velocity and position error norms $\|\tilde{\bm{v}}_{ij}(t)\|$ and $\|\tilde{\bm{p}}_{ij}(t)\|$ for the star formation.}
    \label{fig:sim2}
\end{figure}

The simulation results are illustrated in Fig. \ref{fig:sim2}, and an animation of the 3D formation trajectories is available at {\tt\small https://tinyurl.com/yc8unad8}. The proposed controller achieves exponential convergence for all agents, whereas the fixed gain controller exhibits significantly slower convergence under the same bearing-PE conditions.

    \section{Conclusions} \label{sec:conclusion}
In this paper, we addressed leader-follower formation control under directed acyclic graph topologies using bearing and relative velocity measurements. By exploiting the accumulated bearing information of the desired formation, we proposed a time-varying gain design whose convergence rate is decoupled from the parameters characterizing the bearing persistence of excitation.
The proposed gain is defined through a Riccati-type equation, with its inverse corresponding to a filtered bearing Gramian. Bearing-PE ensures the well-posedness and uniform boundedness of the gain, while the resulting exponential convergence rate is determined by the gain design parameter. Convergence of the formation error was established under a bearing-PE condition without requiring bearing rigidity assumptions. The framework was further extended to double-integrator dynamics.
Simulation results illustrate the convergence performance and demonstrate the advantage of the proposed gain over classical designs.
Future work will focus on incorporating collision avoidance mechanisms and extending the framework to more general interaction topologies and time-varying formations.

\appendix

\begin{lemma} \label{lem:piy_PE}
    If the matrix $\sum_{j\in\mathcal{N}_i} \pi_{\bm{y}_{ij}^\star}$ is PE in the sense of Definition \ref{def:PE} and $\bm{y}_{ij}(t) \to \bm{y}_{ij}^\star(t)$ exponentially for all $j \in \mathcal{N}_i$. Then, there exist $t_0 >0$ and $\mu' > 0$ such that
    $$\frac{1}{\delta}\int_{t}^{t+\delta} \sum_{j\in\mathcal{N}_i} \pi_{\bm{y}_{ij}(\tau)}d\tau \geq \mu' I_3 > 0,
    $$ 
    for all $t\geq t_0$.
\end{lemma}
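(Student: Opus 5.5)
The plan is a perturbation argument. I compare the integrated Gramian of the actual bearings with that of the desired bearings, and show that the difference becomes small uniformly in the window once $t$ is large. The first step is a pointwise Lipschitz-type estimate for the projection map. For unit vectors $\bm{a},\bm{b}\in\S^2$,
\[
\pi_{\bm{a}}-\pi_{\bm{b}} = \bm{b}\bm{b}^\top-\bm{a}\bm{a}^\top = (\bm{b}-\bm{a})\bm{b}^\top + \bm{a}(\bm{b}-\bm{a})^\top .
\]
It follows that $\|\pi_{\bm{a}}-\pi_{\bm{b}}\|\le 2\|\bm{a}-\bm{b}\|$. Applied with $\bm{a}=\bm{y}_{ij}(\tau)$ and $\bm{b}=\bm{y}_{ij}^\star(\tau)$, the hypothesis gives constants $c,\rho>0$ such that $\|\bm{y}_{ij}(\tau)-\bm{y}_{ij}^\star(\tau)\|\le c e^{-\rho \tau}$. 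We may take $c,\rho$ common to all $j\in\mathcal{N}_i$ by choosing the worst ones, since $\mathcal{N}_i$ is finite. Consequently
\[
\Big\|\sum_{j\in\mathcal{N}_i}\big(\pi_{\bm{y}_{ij}(\tau)}-\pi_{\bm{y}_{ij}^\star(\tau)}\big)\Big\|\le 2 m_i c\, e^{-\rho\tau}.
\]

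Next I integrate over the PE window $[t,t+\delta]$, where $\delta$ and $\mu$ are the PE constants of $Q_i^\star$. For any unit vector $\bm{z}$ this gives
\[
\frac{1}{\delta}\int_t^{t+\delta}\bm{z}^\top Q_i(\tau)\bm{z}\,d\tau \;\ge\; \frac{1}{\delta}\int_t^{t+\delta}\bm{z}^\top Q_i^\star(\tau)\bm{z}\,d\tau - 2m_i c\,e^{-\rho t} \;\ge\; \mu - 2m_i c\, e^{-\rho t}.
\]
Here I bounded $e^{-\rho\tau}\le e^{-\rho t}$ on the window. I then fix any $\mu'\in(0,\mu)$, for instance $\mu'=\mu/2$, and choose
\[
t_0 := \max\Big\{0,\ \tfrac{1}{\rho}\ln\!\big(\tfrac{2m_i c}{\mu-\mu'}\big)\Big\},
\]
taking $t_0$ strictly positive if necessary. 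Then for all $t\ge t_0$ the integrated Gramian is bounded below by $\mu' I_3$, which is exactly the claim, with the same window length $\delta$.

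The only delicate point is interpreting ``$\bm{y}_{ij}\to\bm{y}_{ij}^\star$ exponentially'' as a uniform bound of the form $c e^{-\rho t}$, so that $t_0$ does not depend on $t$. In the paper's use, this convergence is derived from $\tilde{\bm{p}}_{ij}\to0$ exponentially together with $\|\bm{p}_{ij}^\star\|$ bounded away from zero. For that case I would add the elementary bound
\[
\Big\|\frac{\bm{p}}{\|\bm{p}\|}-\frac{\bm{p}^\star}{\|\bm{p}^\star\|}\Big\|\le \frac{2\|\bm{p}-\bm{p}^\star\|}{\|\bm{p}^\star\|},
\]
which transfers the exponential rate of $\tilde{\bm{p}}_{ij}$ to the bearings with $c$ proportional to $\|\tilde{\bm{p}}_{ij}(0)\|/\inf_t\|\bm{p}_{ij}^\star(t)\|$. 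Because of this, $t_0$ depends on the initial error; this is acceptable for the stated lemma.
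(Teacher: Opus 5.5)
Your proof is correct and complete. The paper omits this proof as ``straightforward,'' and your perturbation argument is the natural one that fills it in: the bound $\|\pi_{\bm{a}}-\pi_{\bm{b}}\|\le 2\|\bm{a}-\bm{b}\|$, integration over the PE window using $e^{-\rho\tau}\le e^{-\rho t}$, and a choice of $t_0$ with $2m_i c\,e^{-\rho t_0}\le \mu-\mu'$. Your side remark transferring the exponential rate from $\tilde{\bm{p}}_{ij}$ to the bearings, using $\|\bm{p}_{ij}^\star\|$ bounded away from zero, is also correct and matches how the lemma is invoked in the proof of Theorem~\ref{thm:riccati_general}.
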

The proof is straightforward and is omitted for brevity.

\begin{lemma}[\cite{morin2017uniform}] \label{lem:nonPE_condition}
    Suppose that the matrix-valued function $\pi_{\bm{y}}(t)$ is not persistently exciting. Then, for any $\bar{\delta} > 0$, there exist a sequence $\{t_p\}_{p\in\mathbb{N}}$ and a vector $\bm{z} \in \mathbb{S}^2$ such that
$$\lim_{p\to\infty}\tfrac{1}{\bar{\delta} }\int_{t_p}^{t_p+\bar{\delta} }\bm{z}^\top\pi_{\bm{y}(\tau)}\bm{z}d\tau = 0.$$
Moreover, the vector $\bm{y}(t)$ converges to a constant vector $\bar{\bm{y}} \in \mathbb{S}^2$ as $t \to \infty$.

\end{lemma}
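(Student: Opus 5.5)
The first assertion follows from negating Definition~\ref{def:PE} and using compactness of $\mathbb{S}^2$; the second needs an extra regularity property of $\bm{y}(t)$. Fix $\bar\delta>0$ and write $A(t):=\frac{1}{\bar\delta}\int_t^{t+\bar\delta}\pi_{\bm{y}(\tau)}\,d\tau$. Since $0\preceq\pi_{\bm{y}}\preceq I_3$, we have $0\preceq A(t)\preceq I_3$ for all $t$. If $\pi_{\bm{y}}$ is not PE, then no pair $(\bar\delta,\mu)$ satisfies \eqref{eq:PE_matrix}, so in particular $\inf_{t\ge0}\lambda_{\min}(A(t))=0$. Hence for each $p\in\N$ there is $t_p$ with $\lambda_{\min}(A(t_p))<1/p$. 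Let $\bm{z}_p\in\mathbb{S}^2$ be a corresponding unit eigenvector, so that $\bm{z}_p^\top A(t_p)\bm{z}_p<1/p$.

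By compactness of $\mathbb{S}^2$, I pass to a subsequence with $\bm{z}_p\to\bm{z}\in\mathbb{S}^2$. Since $\|A(t)\|\le1$, we have $|\bm{z}^\top A(t_p)\bm{z}-\bm{z}_p^\top A(t_p)\bm{z}_p|\le 2\|\bm{z}-\bm{z}_p\|$. Both terms on the right go to zero, so $\bm{z}^\top A(t_p)\bm{z}\to0$, which is the displayed limit. Since $\bm{z}^\top\pi_{\bm{y}}\bm{z}=\|\bm{z}\times\bm{y}\|^2=1-(\bm{z}^\top\bm{y})^2$, this says that $\bm{y}(\tau)$ is, in $L^2$-average, close to $\pm\bm{z}$ on the windows $[t_p,t_p+\bar\delta]$.

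For the ``moreover'' part, I would first upgrade the averaged statement to a pointwise one on windows. I would use that $\bm{y}$ is uniformly continuous, which holds in the paper's setting because $\dot{\bm{y}}=\pi_{\bm{y}}\dot{\bm{p}}/\|\bm{p}\|$ is bounded when relative velocities are bounded and $\|\bm{p}\|$ is bounded away from zero (Assumptions~\ref{assump:no_collision}--\ref{assump:2}). Uniform continuity together with the vanishing integral of $\|\bm{z}\times\bm{y}\|^2$ gives $\sup_{\tau\in[t_p,t_p+\bar\delta]}\|\bm{z}\times\bm{y}(\tau)\|\to0$. Since $\bm{y}$ is continuous on each window, it stays near one of $\pm\bm{z}$ there, and I would set $\bar{\bm{y}}$ to be that sign choice of $\bm{z}$ (after a further subsequence if needed).

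The main obstacle is passing from convergence on the windows $[t_p,t_p+\bar\delta]$ to convergence of $\bm{y}(t)$ for all $t\to\infty$. This does not hold for an arbitrary non-PE signal: $\bm{y}$ could return near $\bm{z}$ only intermittently, or drift slowly without ever being exciting. I would first try to exploit that non-PE holds for every $\bar\delta$. Letting $\bar\delta_k\to\infty$ and extracting a diagonal sequence gives windows of arbitrarily long length on which $\bm{y}\approx\pm\bm{z}$; the direction $\bm{z}$ is the same for all $k$ by compactness plus the fact that consecutive windows overlap in time. If that is not enough, I would settle for what the application in Section~\ref{sec:2nd_order_control} actually uses, namely $\pi_{\bm{y}}\to\pi_{\bar{\bm{y}}}$ along these long windows, and cite \cite{morin2017uniform} for the full-limit version under their additional structural hypotheses.
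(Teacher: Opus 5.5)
\textbf{Verdict.} Your first assertion is proved correctly. The ``moreover'' clause is not proved, and no argument can prove it, because as stated it is false.

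\textbf{First assertion.} At the fixed $\bar\delta$, failure of \eqref{eq:PE_matrix} for every $\mu$ gives $\inf_{t\ge0}\lambda_{\min}(A(t))=0$. Unit eigenvectors, compactness of $\mathbb{S}^2$ and $\|A(t)\|\le1$ then yield a single $\bm{z}$. The paper gives no proof of its own, only the citation to \cite{morin2017uniform}, so this is the natural argument. One addition makes it useful asymptotically. Your $t_p$ need not tend to infinity, but they can be chosen to. If $\lambda_{\min}(A(t))\ge\mu_0>0$ for all $t\ge T$, then every window of length $T+\bar\delta$ contains a full $\bar\delta$-window starting at or after $T$. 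The pair $(T+\bar\delta,\ \bar\delta\mu_0/(T+\bar\delta))$ would then satisfy \eqref{eq:PE_matrix}, a contradiction.

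\textbf{The ``moreover'' clause is false.} Let $\bm{y}(t)=(\cos\theta(t),\sin\theta(t),0)^\top$ with $\theta(t)=\log(1+t)$. For $\tau\in[t,t+\delta]$ we have $0\le\theta(\tau)-\theta(t)\le\delta/(1+t)$. Hence $\bm{y}(t)^\top\pi_{\bm{y}(\tau)}\bm{y}(t)=\sin^2(\theta(\tau)-\theta(t))\le\delta^2/(1+t)^2$. So for every $\delta$, $\lambda_{\min}\big(\frac{1}{\delta}\int_t^{t+\delta}\pi_{\bm{y}(\tau)}\,d\tau\big)\to0$ as $t\to\infty$, and $\pi_{\bm{y}}$ is not PE. Yet $\theta(t)\to\infty$, so $\bm{y}(t)$ keeps circling and has no limit.

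This signal is smooth with $\|\dot{\bm{y}}\|\le1$. It is the bearing of a unit-norm relative position with bounded velocity. So the uniform continuity you invoke does not rescue the claim. That uniform continuity is also not implied by Assumptions~\ref{assump:no_collision}--\ref{assump:2} alone: they bound only desired velocities and do not keep $\|\bm{p}_{ij}\|$ uniformly away from zero. Citing \cite{morin2017uniform} cannot supply the full limit for the lemma as written.

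\textbf{Where your diagonal step fails.} The windows obtained for different $\bar\delta_k$ come from unrelated sequences, and nothing forces them to overlap. In the example above, $\bm{y}$ matches every direction of the $xy$-plane on arbitrarily long late windows. So compactness pins down no common $\bm{z}$.

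\textbf{What you can prove, and why it does not substitute.} Choose $\bar\delta_k\to\infty$ and $\mu_k$ with $\bar\delta_k\mu_k\to0$, so that the integral itself vanishes and not just the average. With uniform continuity and a convergent subsequence $\bm{z}^{(k)}\to\bm{z}$, you get $\pi_{\bm{y}}\to\pi_{\bm{z}}$ uniformly on windows $[s_k,s_k+\bar\delta_k]$. That is weaker than what Section~\ref{sec:2nd_order_control} actually uses. There, $\Sigma_2\to\frac{1}{\lambda_2}\pi_{\bar{\bm{y}}_{21}}$ and $\pi_{\bm{y}_{21}}\to\pi_{\bar{\bm{y}}_{21}}$ are taken as $t\to\infty$ and fed into a Lyapunov argument on all of $[0,\infty)$. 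Your windowed fallback therefore cannot replace the false clause unless that contradiction argument is reworked.
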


\begin{lemma}[Projected Inverse] \label{lem:projected_inverse}
Let $A_\varepsilon \in \mathbb{R}^{n\times n}$ be a symmetric positive matrix and be invertible for all $\varepsilon \neq 0$, and let  $P$ be an orthogonal projector $P^2 = P$. Assume that in the limit $\varepsilon \to 0$, $A_\varepsilon \to P$ in the sense that the restriction $P A_\varepsilon P$ is invertible on $\mathrm{Im}(P)$ for small $\varepsilon$. Then the projected inverse satisfies $$\lim_{\varepsilon \to 0} P A_\varepsilon^{-1} P = P. $$
Equivalently  $P A_\varepsilon^{-1} P=P$ is the pseudo inverse of $A_\varepsilon$ on $\mathrm{Im}(P)$.

\end{lemma}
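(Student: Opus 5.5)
The plan is a block (Schur complement) computation with respect to the orthogonal splitting $\R^n = \mathrm{Im}(P)\oplus\ker(P)$. Choose an orthonormal basis adapted to this splitting, so that $P=\mathrm{blkdiag}(I_r,0)$ with $r=\mathrm{rank}(P)$, and write
\begin{equation*}
A_\varepsilon=\begin{bmatrix} E_\varepsilon & B_\varepsilon\\ B_\varepsilon^\top & D_\varepsilon\end{bmatrix},
\end{equation*}
where $E_\varepsilon = PA_\varepsilon P|_{\mathrm{Im}(P)}$. The hypothesis $A_\varepsilon\to P$ then reads $E_\varepsilon\to I_r$, $B_\varepsilon\to 0$ and $D_\varepsilon\to 0$. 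Since $A_\varepsilon\succ 0$, the block $D_\varepsilon$ is positive definite and hence invertible for $\varepsilon\neq 0$, so the Schur complement with respect to $D_\varepsilon$ is available.

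The block-inversion formula gives
\begin{equation*}
PA_\varepsilon^{-1}P=\big(E_\varepsilon-B_\varepsilon D_\varepsilon^{-1}B_\varepsilon^\top\big)^{-1}
\end{equation*}
on $\mathrm{Im}(P)$. The matrix being inverted is the Schur complement, which is positive definite because $A_\varepsilon\succ0$. Everything therefore reduces to showing $B_\varepsilon D_\varepsilon^{-1}B_\varepsilon^\top\to 0$. Once that holds, the Schur complement converges to $I_r$, and continuity of inversion at $I_r$ gives $PA_\varepsilon^{-1}P\to P$.

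The main obstacle is the cross term $B_\varepsilon D_\varepsilon^{-1}B_\varepsilon^\top$, because it is a $0\cdot\infty$ form. As literally stated, convergence of $A_\varepsilon$ to $P$ does not control it. For example, $A_\varepsilon=\begin{bmatrix}1&c\\c&\varepsilon\end{bmatrix}$ with $c^2=\varepsilon/2$ tends to $P=\mathrm{blkdiag}(1,0)$, yet $PA_\varepsilon^{-1}P=\varepsilon/(\varepsilon-c^2)=2$. So I would make explicit, as the intended content of ``$A_\varepsilon\to P$ with $PA_\varepsilon P$ invertible'', the extra condition $\|B_\varepsilon\|^2/\lambda_{\min}(D_\varepsilon)\to0$. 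The simplest sufficient case is $B_\varepsilon\equiv 0$, i.e.\ $A_\varepsilon$ commutes with $P$, and then the result is immediate.

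For the use in the proof of the second-order theorem, with $A_\varepsilon=\lambda_2\Sigma_2$ and $P=\pi_{\bar{\bm y}_{21}}$, I would verify this condition from \eqref{eq:Pi_solution}. The off-diagonal part of $\Sigma_2$ comes only from $e^{-\lambda_2 t}\Sigma_2(0)$ together with the integrated deviation $\pi_{\bm y_{21}}-\pi_{\bar{\bm y}_{21}}$, which is of first order in $\bm y_{21}-\bar{\bm y}_{21}$. The kernel block $\bar{\bm y}_{21}^\top\Sigma_2\bar{\bm y}_{21}$ also receives a contribution of order $\|\bm y_{21}-\bar{\bm y}_{21}\|^2$ from the integral, besides $e^{-\lambda_2 t}\bar{\bm y}_{21}^\top\Sigma_2(0)\bar{\bm y}_{21}$. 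The ratio $\|B\|^2/D$ must then be estimated using the Cauchy--Schwarz inequality on these integrals. This estimate is the step I expect to be delicate, and if it fails I would simply add the vanishing of the cross term as a hypothesis of the lemma.
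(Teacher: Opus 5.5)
Your counterexample is correct: $A_\varepsilon=\begin{bmatrix}1&c\\ c&\varepsilon\end{bmatrix}$ with $c^2=\varepsilon/2$ is positive definite, tends to $P=\mathrm{blkdiag}(1,0)$, and has $PA_\varepsilon P$ invertible on $\mathrm{Im}(P)$, yet $PA_\varepsilon^{-1}P=2P$. So the lemma is false as stated. Your proposal is the right response to it, not a failed proof.

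The paper's argument breaks at exactly the $0\cdot\infty$ term you isolated. It passes to the limit in $PA_\varepsilon PA_\varepsilon^{-1}P\bm x$ using only $PA_\varepsilon P\to P$, even though $A_\varepsilon^{-1}$ is unbounded. In your block notation, set $S_\varepsilon:=E_\varepsilon-B_\varepsilon D_\varepsilon^{-1}B_\varepsilon^\top$. The quantity the paper claims tends to $\bm x$ is $E_\varepsilon S_\varepsilon^{-1}\bm x=\bm x+B_\varepsilon D_\varepsilon^{-1}B_\varepsilon^\top S_\varepsilon^{-1}\bm x$, which equals $2\bm x$ in your example.

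Your Schur-complement argument is correct, and it identifies the exact missing hypothesis. Since $S_\varepsilon^{-1}$ is $PA_\varepsilon^{-1}P$ restricted to $\mathrm{Im}(P)$ and $E_\varepsilon\to I_r$, the condition $B_\varepsilon D_\varepsilon^{-1}B_\varepsilon^\top\to0$ is necessary as well as sufficient. Your condition $\|B_\varepsilon\|^2/\lambda_{\min}(D_\varepsilon)\to0$ is a convenient sufficient form of it.

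Your hesitation about the application is justified, and the estimate you sketch will not close. Take $A=\lambda_2\Sigma_2$, $P=\pi_{\bar{\bm y}_{21}}$, weight $w(\tau)=\lambda_2e^{-\lambda_2(t-\tau)}$ and $s=\|\bar{\bm y}_{21}\times\bm y_{21}\|$. Up to $e^{-\lambda_2 t}$ transients, $D=\int_0^t w s^2\,d\tau$ and $\|B\|\le\int_0^t w s\,d\tau$. Cauchy--Schwarz then gives only $\|B\|^2\lesssim D$, a bounded ratio rather than a vanishing one.

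The ratio genuinely need not vanish. Take $M_2(0)$ a multiple of $I_3$ and $\bm y_{21}(t)=[\cos(1/t)\;\,\sin(1/t)\;\,0]^\top$ for $t\ge1$. Then $\pi_{\bm y_{21}}$ is not PE and $\bm y_{21}\to\bar{\bm y}_{21}=[1\;0\;0]^\top$. However, the planar block of $\Sigma_2$ has $(1,1)$ entry $\approx 1/(\lambda_2 t^2)$ and determinant $\approx 1/(\lambda_2^4t^4)$. Hence, with $\bm e_2:=[0\;1\;0]^\top\in\mathrm{Im}(\pi_{\bar{\bm y}_{21}})$, the quantity $\bm e_2^\top M_2\bm e_2$ grows like $\lambda_2^3t^2$ instead of tending to $\lambda_2$.

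Adding the cross-term condition as a hypothesis therefore only moves the gap into the second-order theorem. Its contradiction argument would need a convergence rate for $\bm y_{21}\to\bar{\bm y}_{21}$, which Lemma~\ref{lem:nonPE_condition} does not supply.
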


\begin{proof}
Let $\mathbb{R}^n = \mathrm{Im}(P) \oplus \ker(P)$ and denote by 
$B_\varepsilon := P A_\varepsilon P : \mathrm{Im}(P) \to \mathrm{Im}(P)$ the restriction of $A_\varepsilon$ to $\mathrm{Im}(P)$. By assumption, $B_\varepsilon$ is invertible for small $\varepsilon$ and $B_\varepsilon \to P|_{\mathrm{Im}(P)} = I_r$, with $r = \dim(\mathrm{Im}(P))$.

Now, let $\bm{x} \in \mathrm{Im}(P)$ and define $\bm{y} := P A_\varepsilon^{-1} P \bm{x} \in \mathrm{Im}(P)$. Since $\bm{y} = P \bm{y}$, we can rewrite
$$
B_\varepsilon \bm{y} = P A_\varepsilon \bm{y} = P A_\varepsilon P A_\varepsilon^{-1} P \bm{x}.
$$
Using $A_\varepsilon \to P$ together with $P^2 = P$, we obtain $P A_\varepsilon P \to P$, and consequently $B_\varepsilon \bm{y} \to P \bm{x} = \bm{x}$.
Finally, since $B_\varepsilon$ is invertible for small $\varepsilon$ and $B_\varepsilon \to I_r$, we deduce $\bm{y} = B_\varepsilon^{-1} \bm{x} \to \bm{x}$. Hence $P A_\varepsilon^{-1} P \bm{x} \to \bm{x}$ for all $\bm{x} \in \mathrm{Im}(P)$, which implies $\lim_{\varepsilon \to 0} P A_\varepsilon^{-1} P = P$.
\end{proof}

\bibliography{ref}

\end{document}